\newcommand{\be}{\begin{equation}}
\newcommand{\en}{\end{equation}}
\newcommand{\bea}{\begin{eqnarray}}
\newcommand{\ena}{\end{eqnarray}}
\newcommand{\beano}{\begin{eqnarray*}}
\newcommand{\enano}{\end{eqnarray*}}
\newcommand{\bee}{\begin{enumerate}}
\newcommand{\ene}{\end{enumerate}}
\newcommand{\Hil}{{\cal H}}
\newcommand{\Id}{1\!\!1}
\newcommand{\F}{{\cal F}}
\newcommand{\Lc}{{\cal L}}
\newcommand{\Sc}{{\cal S}}
\newcommand{\D}{{\cal D}}
\newcommand{\K}{{\cal K}}
\newcommand{\M}{{\cal M}}
\newcommand{\1}{1 \!\! 1}
\newtheorem{thm}{Theorem}
\newtheorem{cor}[thm]{Corollary}
\newtheorem{lemma}[thm]{Lemma}
\newtheorem{prop}[thm]{Proposition}
\newtheorem{rem}[thm]{Remark}
\newtheorem{defn}[thm]{Definition}
\newenvironment{proof}{\noindent {\bf Proof:}}{\hfill$\Box$}
\begin{document}
\thispagestyle{empty}

\vspace*{1cm}

\begin{center}
{\Large \bf Hamiltonians generated by Parseval frames} 

\vspace{4mm}

{\large F. Bagarello${}^{1,2}$, S. Ku\.{z}el${}^3$}\\
\vspace*{1cm}

\normalsize
${}^1$Dipartimento di Ingegneria, Universit\`a di Palermo, I-90128  Palermo, Italy\\

\vspace*{.5cm}
${}^2$ INFN, Sezione di Napoli, Italy\\

\vspace*{.5cm}
${}^3$ AGH University, Krak\'ow, Poland.\\

\end{center}

\vspace*{0.5cm}

\begin{abstract}
\noindent  It is known that self-adjoint Hamiltonians with purely discrete eigenvalues can be written as  (infinite) linear combination of mutually orthogonal projectors with eigenvalues as coefficients of the expansion. The projectors are defined by the eigenvectors of the Hamiltonians. In some recent papers, this expansion has been extended to the case in which these  eigenvectors form a Riesz basis or, more recently, a $\D$-quasi basis, \cite{bell,bit}, rather than an orthonormal basis. Here we discuss what can be done when these sets are replaced by Parseval frames. 
This interest is motivated by physical reasons, and in particular by the fact that the {\em mathematical } Hilbert space where the physical system is originally defined, contains  sometimes also states which cannot really be occupied by the {\em physical} system itself. In particular,  we show what changes in the spectrum of the observables, when going from orthonormal bases to Parseval frames. 
In this perspective we propose the notion of $E$-connection for observables. Several examples are discussed.
\end{abstract}

\vspace{2cm}


\vfill

\newpage

\section{Introduction}\label{sect1}
As it is well known, in quantum mechanics the dynamics of closed systems is quite often deduced out of a self-adjoint operator, 
the Hamiltonian $H$ of the system $\Sc$, which is the main ingredient to write down the Schr\"odinger equation for the wave function of $\Sc$, $i\dot \Psi(t)=H\Psi(t)$.
 An important aspect of $H$, useful for the analysis of $\Sc$, is the set of its eigenvalues and eigenvectors. Assuming that $H$ has only discrete spectrum, 
 each eigenvalue $E_j$ of $H$ is real, and the eigenvectors corresponding to different eigenvalues are orthogonal. 
 In particular, if each $E_j$ has multiplicity one, the set of related {normalized} eigenvectors, $\F_e=\{e_j\}$,
  is an orthonormal basis {(ONB) of the Hilbert space} $\Hil$, which can be thought as the closure of the linear span of the $e_j$'s. 
  The various {linear combinations of} $e_j$ represent different physical states of the system $\Sc$, e.g., different energy configurations. 
  Of course, the same point of view can be extended to other self-adjoint operators having the same properties required to $H$, or even to other operators with 
  not only discrete spectrum. For instance, if $\hat p$ is the momentum operator defined on $\Lc^2(\mathbb{R})$, its spectrum coincides with $\mathbb{R}$, and each $p\in\mathbb{R}$ corresponds
   to a generalized eigenvectors of $\hat p$, the plane wave $\frac{1}{\sqrt{2\pi}}\,e^{-ipx}$, which does not belong to $\Lc^2(\mathbb{R})$ but can still 
   be normalized in a distributional sense, \cite{mer,mess}. 
  
In the analysis of a concrete physical situation it may happen that \emph{not all} {vectors of $\Hil$} are {\em relevant} in the analysis of $\Sc$. 
This is the case, for instance, when the energy of $\Sc$ cannot {\em really} increase too much, or when $\Sc$ is localized in a bounded region, or still when the value of the momentum 
of $\Sc$ cannot be too large. In all these cases, but not only, it is reasonable to consider a {\em physical vector space}, $\Hil_{ph}$, as the subset of the
 {\em mathematical} Hilbert space $\Hil$ on which $\Sc$ is originally defined. 
 This is exactly the point of view in \cite{jpg}, just to cite one, where $\Hil_{ph}$ contains those functions of  $\Hil=\Lc^2(\mathbb{R})$ 
 which are zero outside a certain compact subset $\D$ of $\mathbb{R}$. Depending on the interpretation of $\Hil$, this approach can be used to describe particles
  localized in $\D$, or particles with a bounded momentum. In \cite{jpg} the main idea is to consider 
  $\Hil_{ph}$ as a subspace of $\Lc^2(\mathbb{R})$ for quantization purposes.
 
  In general, the physical Hilbert space
 $\Hil_{ph}$ can be constructed as the projection of $\Hil$, via some suitable orthogonal projector operator 
 $P$ i.e.,   $\Hil_{ph}=P\Hil$.  
 Then, if $H$ is any (bounded, for the moment) operator on $\Hil$, and $f,g\in\Hil_{ph}$,  its matrix elements $\langle f,Hg\rangle$ can be rewritten as follows
$$
\langle f,Hg\rangle=\langle Pf,HPg\rangle=\langle Pf,(PHP)Pg\rangle=\langle Pf,H_{ph}Pg\rangle,
$$
where,  $H_{ph}=PHP$ is  the {\em physical part of $H$} acting in $\Hil_{ph}$.
Of course, if $H=H^\dagger$, then $H_{ph}=H^\dagger_{ph}$ as well. 

Let the set $\F_e=\{e_n,  \, n=1,2,\ldots, N\}$ of normalized eigenvectors
of $H$ be an ONB of $\Hil$ and let $\{E_n, \, n=1,2,\ldots, N\}$ be the corresponding set of 
real eigenvalues  (here $N=\dim\Hil\leq\infty$). Then the action of the corresponding Hamiltonian $H$ can be presented as
\begin{equation}\label{K1}
Hg=\sum_{n=1}^{N}E_n \langle e_n, g\rangle\,e_n
\end{equation}  
and its maximal  domain of definition is  $D(H)=\{g\in\Hil: \sum_{n=1}^{N}E_n \langle e_n, g\rangle\,e_n\in\Hil\} = \{g\in\Hil: \sum_{n=1}^{N}E_n^2 \,|\langle e_n, g\rangle|^2<\infty\}$. In view of 
\eqref{K1}, the physical part of $H$ has the form
$$
PHPg=\sum_{n=1}^{N}E_n \langle e_n, Pg\rangle\,\varphi_n=\sum_{n=1}^{N}E_n \langle Pe_n, Pg\rangle\,\varphi_n=
\sum_{n=1}^{N}E_n \langle \varphi_n, f\rangle\,\varphi_n,  
$$
where $f=Pg$ and  $\varphi_n=Pe_n.$  Therefore, 
 the operator $H_{ph}=PHP$ acts in a subspace $\Hil_{ph}$ of $\Hil$ and
 \begin{equation}\label{K2}
 H_{ph}f=\sum_{n=1}^{N}E_n \langle \varphi_n, f\rangle\,\varphi_n,  \qquad  f\in{D(H_{ph})}={PD(H)}.
\end{equation} 

In general, the set of vectors $\F_\varphi=\{\varphi_n=Pe_n, \, n=1,2,\ldots, N\}$  loses the property of being an ONB of 
$\Hil_{ph}$ and, instead, it turns out to be a  Parseval frame\footnote{see section \ref{sII.1} for definition and elementary properties of frames} of $\Hil_{ph}$.
For this reason, the operator $H_{ph}$ in \eqref{K2} can be considered as \emph{a Hamiltonian $H_\varphi(=H_{ph})$ generated by a Parseval frame  $\F_\varphi$}, in analogy with what discussed in \cite{bell,bit}.
The operators defined by \eqref{K2} are particular case of more general concept of \emph{multipliers}, which was introduced and studied in \cite{BA1, BA2}.
{Furthermore, for finite Parseval frames $(N<\infty)$, the operators $H_{ph}$
can be regarded as quantum observables that are obtained by using a finite version of the Klauder-Berezin-Toeplitz-type  coherent state quantization \cite{Gaz} of
a real-valued function $f$ defined on a set of data $\{a_1, \ldots,a_N\}$ related to a physical
system. The eigenvalues of $H_{ph}$ form the ‘quantum spectrum’ of a classical observable $f$  whereas its  ‘classical spectrum’ 
coincides with the set of values $\{E_n=f(a_n)\}_{n=1}^N$ \cite{CCGG, CGV}.  These results can be generalized to the case of infinite
Parseval frames with the use of POVM quantization developed in \cite{GH}.}

The main objective of this paper is  the investigation of Hamiltonians generated by Parseval frames.  
As we will see, an interesting aspect is that, when we project from $\Hil$ to $\Hil_{ph}$, the eigenvalues of the observables are not preserved in general. 
This could have relevant consequences in concrete situations, where $H$ is just a formal {\em simple} operator, while $H_{ph}$ is its really useful physical counterpart.
In other words, while each $E_n$ is an eigenvalue for $H$, see \eqref{K1}, $E_n$ is not an eigenvalue for $H_{ph}$ if $\|\varphi_n\|\not=1$, 
despite of the fact that $H$ and $H_{ph}$ share similar expansions \eqref{K1} and \eqref{K2}, but in terms of families of vectors which have different properties.
 In particular, $\F_\varphi$ is not an ONB,
 and therefore $\langle \varphi_j,\varphi_k\rangle\neq \delta_{j,k}$, in general.  
  We construct five explicit examples in Section \ref{sII.2} showing, among other features, how eigenvalues change.

{In the present paper, we concentrate on the case of bounded operators. 
Unbounded Hamiltonians generated by Parseval frames have a lot of delicate properties and they will be considered in a forthcoming paper. }
 
The paper is structured as follows:  after short preliminaries about frames in Section \ref{sII.1}, we begin our analysis of the bounded Hamiltonians  $H_\varphi$.  
General methods of the calculation of eigenvalues and eigenvectors of $H_\varphi$ are presented in  
Theorem \ref{K10} and Corollary \ref{K12}.  
In Section \ref{sII.3},  we introduce and study a physically motivated relation ($E$-connection)
 between Parseval frames and ONBs (Definition \ref{K71}).  It should be mentioned that this relation, when it exists, 
is not related to the possibility of getting an ONB in a larger space (see the Naimark dilation theorem \ref{Naimark},  \cite{hanlar})  
since it works in the same Hilbert space.  In fact, it is more connected with what is stated in \cite[Corollary 8.33]{heil}  or \cite[Theorem 5.5.5]{chri}.
Conclusions are given in Section \ref{secconcl}.

\section{Bounded Hamiltonians defined by Parseval frames}

\subsection{Frames and Parseval frames}\label{sII.1}
Here all necessary information about frame theory are presented in a form convenient for our exposition.
More details on frames can be found in monographs \cite{chri, heil}.
The papers \cite{BCCL, Caz, CazLynch, hanlar}  are recommended as complementary reading on the subject.

Let $\K$ be a complex Hilbert space with scalar product $\langle \cdot, \cdot \rangle$ 
linear in the second argument.  Let $\mathbb{J}$ denote a generic countable (or finite) index set such as
$\mathbb{Z}$, $\mathbb{N}$, $\mathbb{N}\cup\{0\}$, etc. By $|\mathbb{J}|$ we denote the cardinality of
$\mathbb{J}$.

\begin{defn}
A set of vectors $\F_\varphi=\{\varphi_j, \,j\in \mathbb{J}\}$ is called a frame in $\K$ 
if there are constants $A$ and $B$, $0<A\leq B<\infty$, such that, for all $f\in\K$, 
\begin{equation}\label{K3}
A\|f\|^2\leq\sum_{j\in\mathbb{J}}|\langle \varphi_j, f\rangle|^2\leq{B}\|f\|^2.
\end{equation}
\end{defn}
The optimal constants in \eqref{K3}  (maximal for $A$ and minimal for $B$) are
called the frame bounds.  A frame $\F_\varphi$ is called  \emph{a tight frame} if $A = B$, and is
called \emph{a  Parseval frame} (PF in the following) if $A = B = 1$. 

{The potential of a frame 
$\F_\varphi$ is defined by 
${\bf FP}[\F_\varphi]=\sum_{j,i\in\mathbb{J}}|(\varphi_i, \varphi_j)|^2$.
Following \cite{BCCL}, 
we recall that the excess ${\bf e}[\F_\varphi]$ of $\F_\varphi$  is the greatest integer $n$ such that  $n$ elements  
can be deleted from the frame $\F_\varphi$ and still leave a complete set, or $\infty$ if there is no upper bound to the number of elements that can be removed.}

Formula \eqref{K3} with $A = B = 1$ is quite similar to the Parseval equality for ONBs  and in fact it is well known that a PF is indeed an ONB 
if each vector in  $\F_\varphi$ is normalized: $\|\varphi_j\|=1$ for all $j\in\mathbb{J}$, 
and vice-versa.  On the other hand, some of the vectors in a PF may be the zero vector, while this is not allowed for ONBs.    
 Similarly to  ONBs,  PFs have the maximality property:  they cannot be enlarged to a PF by adding non-zero vectors 
 (a unique way to enlarge is to add zero vectors). 

It is well known that, given an ONB $\F_e=\{e_j, \, j\in \mathbb{J}\}$ in a Hilbert space $\Hil$ containing $\K$ as a subspace, and an orthogonal projection 
$P:\Hil\rightarrow\K$, the set $\F_{\varphi}=\{\varphi_j=Pe_j, \, j\in \mathbb{J}\}$ is a PF for $\K$. 
The inverse of this statement is the so-called Naimark  theorem (see, for instance, \cite[Propositions 1.1, 1.4]{hanlar}),  which is crucial for our investigations. More precisely:
\begin{thm}\label{Naimark}
Let $\F_\varphi=\{\varphi_j, \,j\in \mathbb{J}\}$ be a PF in a Hilbert space $\K$. Then there exists a Hilbert space $\M$ and a PF
$\F_\psi=\{\psi_j, \,j\in \mathbb{J}\}$ in $\M$ such that
\begin{equation}\label{K3b}
\F_e=\{e_j=\varphi_j\oplus\psi_j \ :\  j\in \mathbb{J}\}
\end{equation} 
is an ONB for\footnote{If $\F_\varphi$ is an ONB, the understanding is that  $\M$ will be the zero space and each  $\psi_j$ will be zero vector.} $\Hil=\K\oplus\M$.  
\end{thm}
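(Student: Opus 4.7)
The plan is to realize $\K$ concretely as a subspace of $\ell^2(\mathbb{J})$ via the analysis operator of the frame, and then take $\M$ to be the orthogonal complement. Let $\{\delta_j\}_{j\in\mathbb{J}}$ denote the canonical ONB of $\ell^2(\mathbb{J})$, and define the analysis operator $T:\K\to\ell^2(\mathbb{J})$ by $(Tf)_j=\langle\varphi_j,f\rangle$. The Parseval identity $\sum_j|\langle\varphi_j,f\rangle|^2=\|f\|^2$ (obtained from \eqref{K3} with $A=B=1$) says exactly that $T$ is a well-defined isometry. Its adjoint, the synthesis operator $T^*:\ell^2(\mathbb{J})\to\K$ given by $T^*(c_j)=\sum_j c_j\varphi_j$, satisfies $T^*\delta_j=\varphi_j$ and $T^*T=I_\K$.

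Next I would consider $P:=TT^*$, which is the orthogonal projection in $\ell^2(\mathbb{J})$ onto the closed subspace $T(\K)$. A short computation gives $P\delta_j=TT^*\delta_j=T\varphi_j$, so under the isometric identification $\K\simeq T(\K)$ the frame vector $\varphi_j$ is identified with $P\delta_j$. Now set $\M:=T(\K)^\perp\subset\ell^2(\mathbb{J})$, let $Q:=I-P$ be the orthogonal projection onto $\M$, and define $\psi_j:=Q\delta_j\in\M$. Then the orthogonal decomposition $\ell^2(\mathbb{J})=T(\K)\oplus\M$ becomes, upon identifying $T(\K)$ with $\K$, the direct sum $\Hil=\K\oplus\M$, and for each $j$
\begin{equation*}
\delta_j \;=\; P\delta_j+Q\delta_j \;\longleftrightarrow\; \varphi_j\oplus\psi_j \;=\; e_j.
\end{equation*}

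Because $\{\delta_j\}$ is an ONB of $\ell^2(\mathbb{J})$, the corresponding family $\{e_j\}$ is an ONB of $\Hil=\K\oplus\M$, which is the required statement. The fact that $\{\psi_j=Q\delta_j\}$ is itself a Parseval frame for $\M$ is then automatic: it is precisely the instance, applied to $Q$ instead of $P$, of the fact recalled just before the theorem, namely that the orthogonal projection of an ONB is a Parseval frame of the image subspace. The degenerate case in which $\F_\varphi$ is already an ONB corresponds to $T$ being onto $\ell^2(\mathbb{J})$, so $\M=\{0\}$ and each $\psi_j=0$, matching the footnote convention.

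There is no substantial analytic obstacle here; the only step that requires a moment of care is recognizing that the dilation can be constructed inside the fixed model $\ell^2(\mathbb{J})$ using $T$, after which the ONB property of $\{e_j\}$ follows from the ONB property of $\{\delta_j\}$ by an orthogonal direct-sum argument. The main conceptual point to highlight is the identification $\varphi_j\leftrightarrow P\delta_j$, since it is what makes the complementary vectors $\psi_j=(I-P)\delta_j$ land naturally in a space orthogonal to $\K$.
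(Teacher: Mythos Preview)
Your argument is correct and is essentially the standard Naimark dilation proof: embed $\K$ isometrically into $\ell^2(\mathbb{J})$ via the analysis operator, take $\M$ to be the orthogonal complement of its range, and read off $e_j=\delta_j$ under the identification. Note, however, that the paper itself does not supply a proof of this theorem; it simply states the result and refers to \cite[Propositions 1.1, 1.4]{hanlar}. Your construction coincides with the one given there (and with the paper's own use of the analysis operator $\theta_\varphi$ in \eqref{K31} and the decomposition \eqref{K11}), so there is no methodological discrepancy to discuss.
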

{
\begin{rem}\label{rem2}
It follows from \cite[Lemma 4.1]{BCCL} and the proof of  \cite[Proposition 1.1]{hanlar} that
the dimension of spaces $\K$ and $\M$ in Theorem \ref{Naimark} coincide,  respectively, with the potential ${\bf FP}[\F_\varphi]$ and with the excess 
${\bf e}[\F_{\varphi}]$ of $\F_{\varphi}$. Moreover, $\dim\Hil=|\mathbb{J}|={\bf FP}[\F_\varphi]+{\bf e}[\F_{\varphi}]$.
\end{rem}}

Each Parseval frame  $\F_\varphi$ determines an analysis operator $\theta_\varphi : \K \to \ell_2(\mathbb{J})$:
\begin{equation}\label{K31}
\theta_\varphi(f)=\{\langle \varphi_j, f\rangle\}_{j\in\mathbb{J}}, \qquad f\in\K,
\end{equation}
which is an isometry $\theta_\varphi : \K \to \ell_2(\mathbb{J})$.
The image set ${\mathcal R}(\theta_\varphi)$ is a closed subspace of $\ell_2(\mathbb{J})$ and, because of the Theorem \ref{Naimark},
\begin{equation}\label{K11}
\ell_2(\mathbb{J})={\mathcal R}(\theta_\varphi)\oplus{\mathcal R}(\theta_\psi),
\end{equation}
where the isometry $\theta_\psi : \M \to \ell_2(\mathbb{J})$ is defined by the PF $\F_\psi$ introduced in Theorem \ref{Naimark}.

The next statement is well known in the frame theory. For convenience of the reader  we give its simple proof.
\begin{lemma}\label{K4}
Let $\{c_j\}_{j\in\mathbb{J}}\in\ell_2(\mathbb{J})$. Then $\sum_{j\in\mathbb{J}}{c_j}\varphi_j=0$ if and only if 
$\{c_j\}_{j\in\mathbb{J}}\in{\mathcal R}(\theta_\psi)$.
\end{lemma}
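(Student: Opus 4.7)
The plan is to identify the synthesis operator $\theta_\varphi^*$ as the adjoint of the analysis operator $\theta_\varphi$ and then read off the equivalence from the orthogonal decomposition \eqref{K11} provided by the Naimark theorem.

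First I would observe that for $\{c_j\}_{j\in\mathbb{J}}\in\ell_2(\mathbb{J})$ and $f\in\K$, the definition of $\theta_\varphi$ in \eqref{K31} gives
\[
\langle \theta_\varphi(f), \{c_j\}\rangle_{\ell_2(\mathbb{J})} = \sum_{j\in\mathbb{J}}\overline{\langle \varphi_j,f\rangle}\,c_j = \Bigl\langle f,\sum_{j\in\mathbb{J}}c_j\varphi_j\Bigr\rangle_{\K},
\]
so the series $\sum_j c_j\varphi_j$ converges in $\K$ (this is the standard argument: the frame condition makes the synthesis operator bounded as the adjoint of the bounded analysis operator), and $\theta_\varphi^*(\{c_j\})=\sum_j c_j\varphi_j$. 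Hence the condition $\sum_j c_j\varphi_j = 0$ is exactly $\{c_j\}\in\ker\theta_\varphi^*$.

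Next I would use the standard Hilbert space identity $\ker\theta_\varphi^* = \mathcal R(\theta_\varphi)^\perp$, valid because $\mathcal R(\theta_\varphi)$ is closed (it is the image of an isometry). Combined with the orthogonal decomposition \eqref{K11}, $\ell_2(\mathbb{J})=\mathcal R(\theta_\varphi)\oplus\mathcal R(\theta_\psi)$, this yields $\mathcal R(\theta_\varphi)^\perp=\mathcal R(\theta_\psi)$. Chaining the equivalences:
\[
\sum_{j\in\mathbb{J}}c_j\varphi_j = 0 \;\Longleftrightarrow\; \{c_j\}\in\ker\theta_\varphi^* \;\Longleftrightarrow\; \{c_j\}\in\mathcal R(\theta_\varphi)^\perp \;\Longleftrightarrow\; \{c_j\}\in\mathcal R(\theta_\psi).
\]

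There is no real obstacle here; the only point that requires a moment of care is the justification that $\theta_\varphi^*$ really is the synthesis map $\{c_j\}\mapsto\sum_j c_j\varphi_j$ on all of $\ell_2(\mathbb{J})$, which follows from the upper frame bound (boundedness of $\theta_\varphi$) applied together with the duality computation above. Everything else is a direct appeal to Theorem \ref{Naimark} and elementary Hilbert space theory.
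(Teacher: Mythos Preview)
Your proof is correct, but it takes a different route from the paper's. The paper argues directly with the Naimark dilation at the level of vectors: given $\{c_j\}\in\ell_2(\mathbb{J})$ it forms $f=\sum_j c_j e_j\in\Hil$ using the ONB of Theorem~\ref{Naimark}, splits this as $\sum_j c_j\varphi_j+\sum_j c_j\psi_j$ via $e_j=\varphi_j\oplus\psi_j$, and reads off the equivalence from the decomposition $\Hil=\K\oplus\M$ (so $\sum_j c_j\varphi_j=0$ forces $f\in\M$, whence $c_j=\langle e_j,f\rangle=\langle\psi_j,f\rangle$). Your argument instead stays entirely on the $\ell_2(\mathbb{J})$ side: you recognize $\{c_j\}\mapsto\sum_j c_j\varphi_j$ as the adjoint $\theta_\varphi^*$, invoke the Hilbert-space identity $\ker\theta_\varphi^*=\mathcal{R}(\theta_\varphi)^\perp$, and then use \eqref{K11}. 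Both are short; the paper's version is more hands-on and avoids naming the synthesis operator explicitly, while yours is cleaner operator theory and makes transparent why the result is really just $\ker\theta_\varphi^*=\mathcal{R}(\theta_\varphi)^\perp$ combined with \eqref{K11}.
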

\begin{proof}
If $\sum{c_j}\varphi_j=0$, then $f=\sum{c_j}e_j=\sum{c_j}\varphi_j+\sum{c_j}\psi_j=\sum{c_j}\psi_j$  due to  \eqref{K3b}.
This means that $f\in\M$ and $c_j=\langle e_j, f \rangle=\langle \psi_j, f \rangle$. Hence,  $\theta_\psi(f)=\{c_j\}$.

Conversely, if $\{c_j\}\in{\mathcal R}(\theta_\psi)$, then there exists $f\in\M$ such that $c_j=\langle \psi_j, f \rangle=\langle e_j, f \rangle$. Hence,
$f=\sum{c_j}e_j=\sum{c_j}\varphi_j+\sum{c_j}\psi_j$ that gives  $\sum{c_j}\varphi_j=0$, since $f\in\M$.
\end{proof}

\begin{lemma}\label{K6}
The Gram matrix  $G_\varphi=[\langle \varphi_i, \varphi_j\rangle]_{i,j\in\mathbb{J}}$ of a PF $\F_{\varphi}$  determines a bounded self-adjoint operator in 
$\ell_2(\mathbb{J})$ which is the orthogonal projector in $\ell_2(\mathbb{J})$ onto ${\mathcal R}(\theta_\varphi)$.  
\end{lemma}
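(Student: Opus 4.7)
The plan is to identify $G_\varphi$ with the composition $\theta_\varphi\theta_\varphi^*$ and then exploit the fact that $\theta_\varphi$ is an isometry. First I would compute the adjoint of the analysis operator: for $\{c_j\}\in\ell_2(\mathbb{J})$ and $f\in\K$, the pairing
\[
\langle \theta_\varphi f,\{c_j\}\rangle_{\ell_2}=\sum_{j\in\mathbb{J}}\overline{\langle\varphi_j,f\rangle}\,c_j=\Big\langle f,\sum_{j\in\mathbb{J}}c_j\varphi_j\Big\rangle
\]
shows that the synthesis operator $\theta_\varphi^*:\ell_2(\mathbb{J})\to\K$ acts by $\theta_\varphi^*(\{c_j\})=\sum_j c_j\varphi_j$ (with convergence guaranteed by the upper frame bound, which in the Parseval case is $1$). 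A direct computation then gives $\theta_\varphi\theta_\varphi^*(\{c_j\})=\big\{\sum_j\langle\varphi_i,\varphi_j\rangle c_j\big\}_{i}$, which is precisely the action of $G_\varphi$ on $\ell_2(\mathbb{J})$. Boundedness and self-adjointness follow immediately since $\theta_\varphi\theta_\varphi^*=(\theta_\varphi\theta_\varphi^*)^*$ is a product of bounded operators.

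Next I would use the Parseval property $\sum_j|\langle\varphi_j,f\rangle|^2=\|f\|^2$, which is exactly the statement that $\theta_\varphi$ is an isometry, i.e.\ $\theta_\varphi^*\theta_\varphi=I_\K$. Then
\[
(\theta_\varphi\theta_\varphi^*)^2=\theta_\varphi(\theta_\varphi^*\theta_\varphi)\theta_\varphi^*=\theta_\varphi\theta_\varphi^*,
\]
so $G_\varphi$ is an idempotent bounded self-adjoint operator on $\ell_2(\mathbb{J})$, hence an orthogonal projection.

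It remains to identify its range as $\mathcal{R}(\theta_\varphi)$. Since $\theta_\varphi^*\theta_\varphi=I_\K$, the synthesis operator is surjective onto $\K$, so $\mathcal{R}(\theta_\varphi\theta_\varphi^*)=\theta_\varphi(\K)=\mathcal{R}(\theta_\varphi)$. As a sanity check consistent with the earlier material, one can verify the two extremal cases directly: if $\{c_j\}=\theta_\varphi(f)\in\mathcal{R}(\theta_\varphi)$, the Parseval reconstruction formula $f=\sum_j\langle\varphi_j,f\rangle\varphi_j$ (which is $\theta_\varphi^*\theta_\varphi f=f$) gives $G_\varphi\{c_j\}=\theta_\varphi(f)=\{c_j\}$; conversely, if $\{c_j\}\in\mathcal{R}(\theta_\psi)$, then by Lemma \ref{K4} we have $\sum_j c_j\varphi_j=0$, i.e.\ $\theta_\varphi^*\{c_j\}=0$, hence $G_\varphi\{c_j\}=0$. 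Together with the decomposition \eqref{K11}, this re-confirms that $G_\varphi$ is the orthogonal projector onto $\mathcal{R}(\theta_\varphi)$.

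The only step that needs mild care is the very first one, namely that the series defining $\theta_\varphi^*$ actually converges in $\K$ for every $\ell_2$ sequence; but this is automatic from the upper frame bound (and is standard frame-theoretic bookkeeping). Beyond that, the argument is a one-line manipulation once $G_\varphi=\theta_\varphi\theta_\varphi^*$ and $\theta_\varphi^*\theta_\varphi=I_\K$ are in place.
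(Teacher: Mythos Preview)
Your argument is correct. The identification $G_\varphi=\theta_\varphi\theta_\varphi^*$ together with the isometry property $\theta_\varphi^*\theta_\varphi=I_\K$ is exactly the mechanism behind the result, and your verification of the range via surjectivity of $\theta_\varphi^*$ (plus the cross-check through Lemma~\ref{K4} and \eqref{K11}) is clean.

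The paper's own proof is not a direct argument but a pair of citations: it invokes \cite[Theorem 7.5]{heil} for boundedness and self-adjointness of the Gram operator, and then identifies this operator with the orthogonal projection constructed in \cite[Proposition 5.3.6]{chri}, observing that the frame operator $S$ appearing there reduces to the identity because $\F_\varphi$ is Parseval. So the underlying mathematics is the same, but the routes differ in presentation: the paper outsources the work to standard frame-theory references, while you give a self-contained three-line proof from first principles. Your approach has the advantage of being independent of external sources and of making the role of the Parseval condition completely transparent; the paper's approach is shorter on the page and situates the lemma within the existing literature.
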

\begin{proof}
In view of \cite[Theorem 7.5]{heil},  the Gram matrix $G_\varphi$ determines a  bounded self-adjoint operator in 
$\ell_2(\mathbb{J})$, which acts as follows $G_\varphi : \{c_j\}_{j\in\mathbb{J}} \to \{\sum_{j\in\mathbb{J}}\langle \varphi_i, \varphi_j\rangle{c_j}\}_{i\in\mathbb{J}}$. 
This operator coincides with the orthogonal projection in  $\ell_2(\mathbb{J})$ onto ${\mathcal R}(\theta_\varphi)$ constructed in \cite[Proposition 5.3.6]{chri}
(here, we should take into account that the frame operator $S$ in the formula (5.12) of \cite{chri} is the identity operator, since $\F_{\varphi}$  is a PF and
the scalar product in  \cite{chri} is linear in the first argument).
\end{proof}

\subsection{Working with bounded operators}\label{sII.2}

Assume that $\{E_j\}_{j\in\mathbb{J}}$ is a bounded sequence of real numbers. 
Hence, there exist  $E_{min}=\inf_{j\in\mathbb{J}}{E_j}$ and $E_{max}=\sup_{j\in\mathbb{J}}{E_j}$
such that $-\infty<{E_{min}}{\leq}E_j\leq{E_{max}}<\infty$. 
The Hamiltonian generated by a PF $\F_{\varphi}$ and a collection of numbers $\{E_j\}$:
\begin{equation}\label{K5}
H_{\varphi}f=\sum_{j\in\mathbb{J}}E_j \langle \varphi_j, f\rangle\,\varphi_j,  \qquad  f\in\K
\end{equation}
is a bounded everywhere defined  self-adjoint operator in $\K$.  These properties of $H_{\varphi}$ can be established as  follows: due to Theorem \ref{Naimark} 
there exists an ONB $\{e_j\}$ of $\Hil$ such that $Pe_j=\varphi_j$, where $P$ is the orthogonal projection in $\Hil$ on $\K$.
Therefore,  $H_{\varphi}=PH_{e}P$, where 
\begin{equation}\label{K5b}
H_{e}g=\sum_{j\in\mathbb{J}}E_j \langle e_j, g\rangle\,e_j,  \qquad  g\in\Hil
\end{equation}
is a  bounded  self-adjoint operator in $\Hil$ with eigenvalues $\{E_j\}$ and the corresponding eigenvectors $\{e_j\}$. 

The spectrum of  $H_{e}$ coincides with the closure of  the set of its eigenvalues:  $\sigma(H_e)=\overline{\{E_j\}}$. 
Since $\langle H_{e}f, f\rangle=\langle H_{\varphi}f, f\rangle$ for $f\in\K$, the spectrum $\sigma(H_\varphi)$
is contained in $[E_{min}, E_{max}]$.  In general, as already noticed, we cannot claim that the quantities  $\{E_j\}$ in \eqref{K5} 
are also eigenvalues of $H_{\varphi}$. This will become evident in the examples below.

Let us assume that   $H_{\varphi}f=\mu{f}$ for nonzero $f\in\K$.  Since $\F_{\varphi}$  is a PF, $f=\sum_{j\in\mathbb{J}}\langle \varphi_j, f\rangle\,\varphi_j$
and the eigenvalue equation takes the form:
\begin{equation}\label{K8}
\sum_{j\in\mathbb{J}}(E_j-\mu)\langle \varphi_j, f\rangle\,\varphi_j=0.
\end{equation}
In view of Lemma \ref{K4}, the relation \eqref{K8} is equivalent to the condition  $\{(E_j-\mu)c_j\}\in\mathcal{R}(\theta_\psi)$,
where $\{c_j=\langle \varphi_j, f\rangle\}\in\mathcal{R}(\theta_\varphi)$.  Summing up, this proves:
\begin{thm}\label{K10}
The operator $H_{\varphi}$ has an eigenvalue $\mu$ if and only if there exists a sequence $\{c_j\}\in\mathcal{R}(\theta_\varphi)$ such that 
$\{(E_j-\mu)c_j\}\in\mathcal{R}(\theta_\psi)$, where  $\mathcal{R}(\theta_\varphi)$ and $\mathcal{R}(\theta_\psi)$
are the orthogonal subspaces of $\ell_2(\mathbb{J})$, see \eqref{K11}. In this case, the corresponding eigenvector  is
$f=\sum_{j\in\mathbb{J}}c_j \varphi_j$.
\end{thm}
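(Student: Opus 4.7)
The plan is to translate the eigenvalue equation $H_{\varphi}f=\mu f$ into a linear condition on the coefficient sequence $c_j=\langle \varphi_j,f\rangle$, using the Parseval reconstruction $f=\sum_{j}c_j\varphi_j$ together with Lemma \ref{K4}. The theorem is then obtained by reading this condition in both directions; essentially it is just Lemma \ref{K4} repackaged in eigenvalue language.

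For the forward direction, I would start from a nonzero $f\in\K$ with $H_{\varphi}f=\mu f$ and expand $f=\sum_{j\in\mathbb{J}}c_j\varphi_j$ with $c_j=\langle \varphi_j,f\rangle$ via the Parseval property. By the definition \eqref{K31} of the analysis operator, $\{c_j\}=\theta_\varphi(f)\in\mathcal{R}(\theta_\varphi)$. Substituting into \eqref{K5} gives $H_{\varphi}f=\sum_{j}E_j c_j\varphi_j$, and subtracting $\mu f=\mu\sum_{j}c_j\varphi_j$ rearranges the eigenvalue equation as $\sum_{j}(E_j-\mu)c_j\varphi_j=0$. Because $\{E_j\}$ is bounded (so $|E_j-\mu|\le |E_{max}|+|E_{min}|+|\mu|$), the sequence $\{(E_j-\mu)c_j\}$ still lies in $\ell_2(\mathbb{J})$, and Lemma \ref{K4} forces $\{(E_j-\mu)c_j\}\in\mathcal{R}(\theta_\psi)$.

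For the converse, given a nonzero $\{c_j\}\in\mathcal{R}(\theta_\varphi)$ with $\{(E_j-\mu)c_j\}\in\mathcal{R}(\theta_\psi)$, I would choose the unique $g\in\K$ with $\theta_\varphi(g)=\{c_j\}$, so $c_j=\langle \varphi_j,g\rangle$, and set $f:=\sum_{j}c_j\varphi_j$. The Parseval reconstruction identifies $f$ with $g$, which is nonzero because $\theta_\varphi$ is an isometry. A direct calculation using \eqref{K5} then yields $H_{\varphi}f-\mu f=\sum_{j}(E_j-\mu)c_j\varphi_j$, which vanishes by the "if" direction of Lemma \ref{K4} applied to the sequence $\{(E_j-\mu)c_j\}\in\mathcal{R}(\theta_\psi)$. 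Hence $H_{\varphi}f=\mu f$ with eigenvector $f=\sum_{j}c_j\varphi_j$.

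I expect essentially no obstacle here; the only points requiring attention are the implicit assumption that $\{c_j\}$ is nonzero (otherwise the resulting $f$ is zero and $\mu$ is not an eigenvalue), and the verification that $\{(E_j-\mu)c_j\}$ actually sits in $\ell_2(\mathbb{J})$ so that Lemma \ref{K4} is applicable. Both are immediate from the boundedness of $\{E_j\}$ and the isometric character of $\theta_\varphi$, so the argument is quite short once Lemma \ref{K4} and the Naimark-type decomposition \eqref{K11} are in place.
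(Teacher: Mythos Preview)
Your proposal is correct and follows essentially the same route as the paper: expand $f$ via the Parseval reconstruction, rewrite the eigenvalue equation as $\sum_j(E_j-\mu)c_j\varphi_j=0$, and invoke Lemma~\ref{K4} to translate this into the condition $\{(E_j-\mu)c_j\}\in\mathcal{R}(\theta_\psi)$ with $\{c_j\}\in\mathcal{R}(\theta_\varphi)$. Your version is merely more explicit about the converse direction and about the $\ell_2$-membership of $\{(E_j-\mu)c_j\}$, which the paper leaves implicit.
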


Lemma \ref{K6} allows one to carry out modifications of Theorem \ref{K10} which, sometimes,  are more convenient for the calculation of
eigenvalues.
Simultaneously with the Gram matrices $G_\varphi=[\langle \varphi_i, \varphi_j\rangle]_{i,j\in\mathbb{J}}$ and $G_\psi=[\langle \psi_i, \psi_j\rangle]_{i,j\in\mathbb{J}}$,
we will use the matrix-valued functions $B_\varphi(\mu)=[b_{ij}(\mu)]_{i,j\in\mathbb{J}}$ and
$D_\varphi(\mu)=[d_{ij}(\mu)]_{i,j\in\mathbb{J}}$, where
\begin{equation}\label{K41}
 b_{ij}(\mu)=\sum_{k\in\mathbb{J}}(E_k-\mu)\langle \varphi_i, \varphi_k\rangle\langle \varphi_k, \varphi_j\rangle, \qquad d_{ij}(\mu)=(E_i-\mu)\langle \varphi_i, \varphi_j\rangle.
\end{equation}

\begin{cor}\label{K12}
Let  $H_{\varphi}$ be a Hamiltonian generated by a PF $\F_{\varphi}$, see \eqref{K5}. The following are equivalent:
\begin{enumerate}
\item[(i)] $\mu\in\mathbb{R}$ is an eigenvalue of $H_{\varphi}$;
\item[(ii)] there exists $\{c_j\}\in\ell_2(\mathbb{J})$ such that $G_\varphi\{c_j\}\not=0$ and $B_\varphi(\mu)\{c_j\}=0$;
\item[(iii)] there exist  $\{c_j\},   \{\tilde{c}_j\}\in\ell_2(\mathbb{J})$ such that $G_\varphi\{c_j\}\not=0$ and $D_\varphi(\mu)\{c_j\}=G_\psi\{\tilde{c}_j\}$.
\end{enumerate} 
\end{cor}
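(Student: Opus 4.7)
The plan is to reduce both (ii) and (iii) to the criterion already established in Theorem \ref{K10} by interpreting the matrices $B_\varphi(\mu)$ and $D_\varphi(\mu)$ as compositions involving the projectors $G_\varphi$ and $G_\psi$.

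First I would set up the projector picture. By Lemma \ref{K6}, $G_\varphi$ acts on $\ell_2(\mathbb{J})$ as the orthogonal projection onto $\mathcal{R}(\theta_\varphi)$, and applying the same lemma to $\F_\psi$ gives that $G_\psi$ is the orthogonal projection onto $\mathcal{R}(\theta_\psi)$. By the orthogonal decomposition \eqref{K11} we therefore have $G_\varphi + G_\psi = I$, $\ker G_\varphi = \mathcal{R}(\theta_\psi)$, and $\mathcal{R}(G_\varphi) = \mathcal{R}(\theta_\varphi)$. For any $\{c_j\} \in \ell_2(\mathbb{J})$ I write $\{c_j^\varphi\} := G_\varphi\{c_j\} \in \mathcal{R}(\theta_\varphi)$, so that Theorem \ref{K10} says $\mu$ is an eigenvalue iff there exists $\{c_j\}$ with $\{c_j^\varphi\} \neq 0$ and $\{(E_j-\mu)\, c_j^\varphi\} \in \mathcal{R}(\theta_\psi)$.

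Next I would compute the two matrix actions explicitly. A direct calculation from \eqref{K41} gives
\begin{equation*}
(D_\varphi(\mu)\{c_j\})_i = (E_i-\mu)\sum_{j\in\mathbb{J}}\langle \varphi_i, \varphi_j\rangle\, c_j = (E_i-\mu)\, c_i^\varphi,
\end{equation*}
so the sequence $D_\varphi(\mu)\{c_j\}$ is exactly $\{(E_j-\mu)\,c_j^\varphi\}$. A second, equally direct, rearrangement of \eqref{K41} gives $B_\varphi(\mu) = G_\varphi \, D_\varphi(\mu)$ as operators on $\ell_2(\mathbb{J})$. With these two identities in hand, the corollary is a translation exercise.

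For (i)$\Leftrightarrow$(iii): the condition $G_\varphi\{c_j\}\neq 0$ is exactly $\{c_j^\varphi\}\neq 0$, and the equation $D_\varphi(\mu)\{c_j\}=G_\psi\{\tilde c_j\}$ means that $\{(E_j-\mu)\,c_j^\varphi\}$ lies in $\mathcal{R}(G_\psi)=\mathcal{R}(\theta_\psi)$; this matches Theorem \ref{K10} exactly. For (i)$\Leftrightarrow$(ii): $B_\varphi(\mu)\{c_j\}=0$ means $D_\varphi(\mu)\{c_j\}\in \ker G_\varphi = \mathcal{R}(\theta_\psi)$, i.e.\ $\{(E_j-\mu)\,c_j^\varphi\}\in\mathcal{R}(\theta_\psi)$, and again $G_\varphi\{c_j\}\neq 0$ gives the non-triviality clause in Theorem \ref{K10}. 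The main (mild) obstacle is simply bookkeeping: one must notice that neither (ii) nor (iii) requires $\{c_j\}$ itself to lie in $\mathcal{R}(\theta_\varphi)$, which is fine because only $G_\varphi\{c_j\}$ enters both matrix actions, so an arbitrary $\{c_j\}\in\ell_2(\mathbb{J})$ works as long as its $\mathcal{R}(\theta_\varphi)$-component is nonzero.
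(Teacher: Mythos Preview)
Your argument is correct and essentially coincides with the paper's proof: both identify $G_\varphi$ (and $G_\psi$) with the orthogonal projections onto $\mathcal{R}(\theta_\varphi)$ (resp.\ $\mathcal{R}(\theta_\psi)$) and then read off $D_\varphi(\mu)=(\mathcal{E}-\mu I)G_\varphi$ and $B_\varphi(\mu)=G_\varphi(\mathcal{E}-\mu I)G_\varphi$, reducing (ii) and (iii) to the eigenvalue criterion of Theorem~\ref{K10}. The only cosmetic difference is that the paper first records the unitary equivalence $H_\varphi\cong\mathcal{P}\mathcal{E}\mathcal{P}$ via $\theta_e$ (equations \eqref{K201}--\eqref{K101}, reused later in Proposition~\ref{propK777}), whereas you invoke Theorem~\ref{K10} directly; the logical content is the same.
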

\begin{proof}
Let $\F_e=\{e_j=\varphi_j\oplus\psi_j \ :\  j\in \mathbb{J}\}$ be an ONB of $\Hil$ determined in Theorem \ref{Naimark} and let
\begin{equation}\label{K31b}
\theta_e(g)=\{\langle e_j, g\rangle\}_{j\in\mathbb{J}}, \qquad g\in\Hil
\end{equation}
be the corresponding  analysis operator that isometrically maps $\Hil$ onto $\ell_2(\mathbb{J})$.
In view of \eqref{K5b},
$\theta_e(H_eg)=\mathcal{E}\theta_e(g)$, 
where  $\mathcal{E}$ is a bounded multiplication operator in $\ell^2(\mathbb{J})$:
 \begin{equation}\label{K201}
 \mathcal{E}\{c_j\}=\{E_jc_j\}, \qquad  {D}(\mathcal{E})=\ell^2(\mathbb{J}).
 \end{equation}
 
Since, for all $f\in\K$,   $\theta_e(f)=\{\langle e_j, f\rangle=\langle Pe_j, f\rangle=\langle \varphi_j, f\rangle\}=\theta_\varphi(f)$, 
where $P$ is the orthogonal projection in $\Hil$ on $\K$ and $\theta_e$ is defined by
\eqref{K31b},  we  arrive at the conclusion that $\theta_e$ maps  $\K$ onto  $\mathcal{R}(\theta_\varphi)$
(see the decomposition \eqref{K11}).
Taking into account that $H_\varphi=PH_eP$,  we obtain that  $H_\varphi$ is unitary equivalent to the operator $\mathcal{E}_\varphi$ 
  acting in the subspace $\mathcal{R}(\theta_\varphi)$ of $\ell^2(\mathbb{J})$:
  \begin{equation}\label{K101}
  \mathcal{E}_\varphi=\mathcal{P}\mathcal{E}\mathcal{P},  \qquad  D(\mathcal{E}_\varphi)=\mathcal{R}(\theta_\varphi)=\mathcal{P}\ell^2(\mathbb{J}),
  \end{equation}
 where $\mathcal{P}=\theta_e{P}\theta_e^{-1}$ is the orthogonal projection operator in $\ell^2(\mathbb{J})$ on $\mathcal{R}(\theta_\varphi)$.

$(i)\iff(ii)$. In view of \eqref{K101},  $\mu\in\sigma_p(H_\varphi)$ if and only if  there exists $\{c_j\}\in\ell_2(\mathbb{J})$ such that
$\mathcal{P}\{c_j\}\not=0$ and $\mathcal{P}(\mathcal{E}-\mu{I})\mathcal{P}\{c_j\}=0$. 
  By Lemma \ref{K6}, the operator  $\mathcal{P}$ coincides with the Gram operator $G_\varphi$. Therefore, the first condition takes
  the  form $G_\varphi\{c_j\}\not=0$,  whereas  the second one: $G_\varphi(\mathcal{E}-\mu{I})G_\varphi\{c_j\}=B_\varphi(\mu)\{c_j\}=0$.

To prove the equivalence of  $(i)$ and $(iii)$ it suffices to repeat the previous arguments and rewrite the condition  $\mathcal{P}(\mathcal{E}-\mu{I})\mathcal{P}\{c_j\}=0$
as $(\mathcal{E}-\mu{I})\mathcal{P}\{c_j\}\in\mathcal{R}(\theta_\psi)$. Taking into account that $G_\psi$ is an orthogonal projection on $\mathcal{R}(\theta_\psi)$
and  $(\mathcal{E}-\mu{I})\mathcal{P}\{c_j\}=D_\varphi(\mu)\{c_j\}$ we complete the proof.
\end{proof}

\subsection{Examples}\label{sectexe}

In this section we propose some examples of $H$ and their related $H_{ph}$, starting with purely mathematical examples, and then considering applications
 arising in concrete applications discussed in the literature. In particular, in Examples 1, 2, and 3, we will show how a physical Hamiltonian $H_{\varphi}=\sum_{j=1}^N E_j \langle \varphi_j, \cdot\rangle\,\varphi_j$, given in terms of some particular PF, can be rewritten in terms of an ONB of its eigenvectors, and how its related eigenvalues are different from the $E_j$ in the expansion above. Of course, in view of what we have discussed in the Introduction, $H_{\varphi}$ should be understood as the physical part of another, {\em larger}, Hamiltonian, $H$, which produces $H_{ph}$ after a suitable projection. This means that we have a first ONB, given by the eigenvectors of $H$, which, when projected to $\Hil_{ph}$, defines a PF. Hence, to find the physical containt of $H_{ph}$, we need to diagonalize $H_{ph}$, getting a second ONB, different from the first one, which can really be considered as the {\em physical eigenvectors} of the system. In Examples 4 and 5 these steps will be particularly emphasized.

{\bf Example 1. Mercedes frame.--}
In  the Hilbert space $\Hil=\mathbb{C}^3$, we consider the ONB
$\F_e=\{e_j,  j\in\mathbb{J}\}$, where $\mathbb{J}=\{1,2,3\}$ and
$$
e_1=\sqrt{\frac{2}{3}}\left[
\begin{array}{c}
1 \\
0 \\
\frac{1}{\sqrt{2}}\\
\end{array}
\right], \qquad e_2=\sqrt{\frac{2}{3}}\left[
\begin{array}{c}
-\frac{1}{2} \\
\frac{\sqrt{3}}{2} \\
\frac{1}{\sqrt{2}} 
\end{array}
\right], \qquad e_3=\sqrt{\frac{2}{3}}\left[
\begin{array}{c}
-\frac{1}{2} \\
-\frac{\sqrt{3}}{2} \\
\frac{1}{\sqrt{2}}
\end{array}
\right].
$$
The orthogonal projection of $\F_e=\{e_1, e_2, e_3\}$ onto the subspace $\K=\mathbb{C}\oplus\mathbb{C}\oplus{0}$
gives rise to a PF $\F_\varphi=\{\varphi_j, j\in\mathbb{J}\}$, where 
$$
\varphi_1=\sqrt{\frac{2}{3}}\left[
\begin{array}{c}
1 \\
0 \\
0
\end{array}
\right], \qquad \varphi_2=\sqrt{\frac{2}{3}}\left[
\begin{array}{c}
-\frac{1}{2} \\
\frac{\sqrt{3}}{2} \\
0
\end{array}
\right], \qquad \varphi_3=\sqrt{\frac{2}{3}}\left[
\begin{array}{c}
-\frac{1}{2} \\
 -\frac{\sqrt{3}}{2}\\
 0
\end{array}
\right].
$$
The PF $\F_\varphi$ is called the Mercedes frame. Its dual PF $\F_\psi=\{\psi_j, j\in\mathbb{J} \}$ in \eqref{K3} consists of vectors  
of the subspace $\M={0}\oplus{0}\oplus\mathbb{C}$ (see formula \eqref{K3b})
$$
\psi_1=\psi_2=\psi_3=\sqrt{\frac{2}{3}}\left[
\begin{array}{c}
0 \\
 0\\
 \frac{1}{\sqrt{2}}
\end{array}
\right]=\left[
\begin{array}{c}
0 \\
 0\\
 \frac{1}{\sqrt{3}}
\end{array}
\right].
$$

 The image sets  of the analysis operators $\theta_\varphi$ and $\theta_\psi$ 
are subspaces of $\ell_2(\mathbb{J})=\mathbb{C}^3$:
$$
\mathcal{R}(\theta_\varphi)=\left\{\left[\begin{array}{c}
2x_1 \\
 -x_1+\sqrt{3}x_2 \\
 -x_1+\sqrt{3}x_2
 \end{array}\right], \ x_1, x_2\in\mathbb{C}\right\}, \qquad \mathcal{R}(\theta_\psi)=\left\{\left[\begin{array}{c}
x_3 \\
x_3 \\
 x_3
 \end{array}\right], \ x_3\in\mathbb{C}\right\}.
$$ 

Let  $E_1$, $E_2$, and $E_3$ be  real quantities. Then the Hamiltonian\footnote{We are assuming here that $H_\varphi$ arises from a different operator $H$, acting on a larger Hilbert space $\mathbb{C}^3$, after taking its projection $H_\varphi=PHP$ on the physical Hilbert space $\Hil_{ph}=\mathbb{C}^2$. Notice that we are not giving here the explicit form of $H$, since it is not really relevant. We will do this in Example 4 and Example 5, since in those cases $H$ has a physical meaning.}
$H_{\varphi}=\sum_{j=1}^3E_j \langle \varphi_j, \cdot\rangle\,\varphi_j$
generated by the Mercedes frame acts in $\mathbb{C}^2$ (we identify $\K$ with $\mathbb{C}^2$) and, due to  Theorem \ref{K10}, 
$\mu\in\mathbb{R}$ is an eigenvalue of $H_{\varphi}$ if and only if the linear system
$$
\left\{\begin{array}{l}
2(E_1-\mu)x_1=x_3 \\
(E_2-\mu)(-x_1+\sqrt{3}x_2)=x_3 \\
(E_3-\mu)(-x_1-\sqrt{3}x_2)=x_3
\end{array}\right.
$$
has a nonzero solution $x_1, x_2, x_3$.  An elementary calculation shows that $\mu\in\sigma(H_{\varphi})$
if and only if
\begin{equation}\label{K21} 
(E_1-\mu)(E_2-\mu)+(E_2-\mu)(E_3-\mu)+(E_1-\mu)(E_3-\mu)=0. 
\end{equation}
Another way to obtain \eqref{K21} is, of course, to present $H_{\varphi}$  (acting in $\mathbb{C}^2$) in the matrix form
\begin{equation}\label{K16}
H_{\varphi}=\frac{1}{6}\left[
\begin{array}{cc}
 4E_1+E_2+E_3 & \sqrt{3}(E_3-E_2)\\
\sqrt{3}(E_3-E_2) &  3(E_2+E_3) \\
\end{array}
\right]
\end{equation}
and, then, to solve the characteristic equation $\det[H_{\varphi}-\mu{I}]=0$.  More detailed calculations show that  the eigenvalues
 are 
 $$
 \tilde E_{\pm}=\frac{1}{3}\left(E_1+E_2+E_3\pm \sqrt{E_1^2+E_2^2+E_3^2-E_1E_2-E_2E_3-E_1E_3}\right), 
 $$ with eigenvectors (if $E_2\neq E_3$)
$$
\tilde e_\pm=N_\pm\left[\begin{array}{c}
-2E_1+E_2+E_3\mp 2\sqrt{E_1^2+E_2^2+E_3^2-E_1E_2-E_2E_3-E_1E_3} \\
 \sqrt{3}(E_2-E_3)
 \end{array}\right].
$$
Here $N_\pm$ are normalization constants fixed to have $\|\tilde e_\pm\|=1$. This means that, using a bra-ket notation,
the operator $H_{\varphi}$ can be rewritten in term of an ONB of its eigenstates:
$$
H_{\varphi}=\sum_{j=1}^{3}\,E_j|\varphi_j \rangle\langle\varphi_j|=\sum_{i=\pm}\,\tilde E_i|\tilde e_i\rangle\langle\tilde e_i|.
$$
This equality clarifies once more that, while $\tilde E_i$ is an eigenvalue of $H_\varphi$, $E_j$ is not.

{\bf Example 2.--}
Let $\{\mathsf{e}_1, \mathsf{e}_2, \ldots \mathsf{e}_K\}$, $K\geq{2}$ be an ONB of a Hilbert space $\K$. The set 
$\F_\varphi=\{\varphi_j, j\in\mathbb{J}\}$ where $\mathbb{J}=\{1, 2, \ldots, K+1\}$ and
\begin{equation}\label{K61}
\varphi_j=\mathsf{e}_j-\frac{1}{K}\sum_{i=1}^{K}\mathsf{e}_i,  \quad 1\leq{j}\leq{K}, \qquad \varphi_{K+1}=\frac{1}{\sqrt{K}}\sum_{i=1}^{K}\mathsf{e}_i,
\end{equation}
is a PF for $\K$ \cite[Lemma 2.5]{CC98}.  Representing each vector $f\in\K$ as $f=\sum_{i=1}^Kx_i\mathsf{e}_i$ and using \eqref{K31b}, 
we define the subspaces $\mathcal{R}(\theta_\varphi)$, $\mathcal{R}(\theta_\psi)$ of $\ell_2(\mathbb{J})=\mathbb{C}^{K+1}$:
$$
\mathcal{R}(\theta_\varphi)=\left\{\left[\begin{array}{c}
x_1 -\frac{1}{K}\sum_{i=1}^kx_i \\
 \vdots \\
 x_k-\frac{1}{K}\sum_{i=1}^kx_i \\
 \frac{1}{\sqrt{K}}\sum_{i=1}^kx_i
 \end{array}\right], \ x_i \in\mathbb{C}\right\}, \quad \mathcal{R}(\theta_\psi)=\mathbb{C}^{K+1}\ominus\mathcal{R}(\theta_\varphi)=\left\{\left[\begin{array}{c}
x \\
\vdots \\
 x \\
 0
 \end{array}\right], \ x\in\mathbb{C}\right\}.
$$ 

Let  $E_1<E_2<\ldots<E_{K+1}$ be  real quantities. Then the Hamiltonian $H_{\varphi}=\sum_{j=1}^{K+1}\,E_j|\varphi_j \rangle\langle\varphi_j|$
generated by the frame $\F_\varphi$ acts in $\K$ and, due to  Theorem \ref{K10}, its  eigenvalues $\tilde{E}_j$ coincide with 
$\mu\in\mathbb{R}$ for which the linear system
$$
\left\{\begin{array}{l}
(E_1-\mu)(x_1 -\frac{1}{K}\sum_{i=1}^kx_i)=x \\
\vdots \\
(E_K-\mu)(x_K -\frac{1}{K}\sum_{i=1}^kx_i)=x \vspace{2mm} \\
(E_{K+1}-\mu)(\frac{1}{\sqrt{K}}\sum_{i=1}^kx_i)=0
\end{array}\right.
$$
has a nonzero solution $x_1, x_2, \ldots x_K, x$. An elementary analysis shows that the largest eigenvalue $\tilde{E}_K$ of $H_{\varphi}$ 
coincides with $E_{K+1}$, while the other eigenvalues $\tilde{E}_j$, $j=1,\ldots{K-1}$ are the roots of the equation
\begin{equation}\label{K33} 
\frac{1}{E_1-\mu}+\frac{1}{E_2-\mu}+\ldots+\frac{1}{E_{K}-\mu}=0.
\end{equation}
The corresponding eigenfunctions are 
$$
f_K=\frac{1}{K}\sum_{i=1}^K\mathsf{e}_i  \quad  (\mbox{for} \ \tilde{E}_K=E_{K+1})  \quad \mbox{and} \quad
f_j=\sum_{i=1}^K\frac{1}{E_i-\mu_j}\mathsf{e}_i,
$$ 
where $\mu_j$ ($j=1,\ldots{K-1}$) is a solution of \eqref{K33}. After the normalization of
$f_j$ we obtain $H_{\varphi}=\sum_{j=1}^{K}\,\tilde{E}_j|f_j \rangle\langle{f}_j|$, which is a different way to write $H_{\varphi}=\sum_{j=1}^{K+1}\,E_j|\varphi_j \rangle\langle\varphi_j|$.

\vspace{2mm}

{\bf Example 3.--} The finite PF considered above is a key counterpart of an infinite PF which 
contains no subset that is a Riesz basis \cite{CC98}.  We investigate spectral properties of Hamiltonians generated by this curious PF. 

Let $\K$ be a separable Hilbert space. Index an ONB for $\K$ as $\{\mathsf{e}_j^K\}_{K\in\mathbb{N}, j=1,\ldots,K}$.
Set $\K_K=\mbox{span}\{\mathsf{e}_1^K, \mathsf{e}_2^K, \ldots \mathsf{e}_K^K\}$.  The vectors $\varphi_j^K\equiv\varphi_j$ defined by
\eqref{K61} form a PF $\F_{\varphi^K}=\{\varphi_j^K, \ j=1, 2\ldots{K+1}\}$ of the space $\K_K$.
Since $\K=\sum_{K=1}^\infty\oplus\K_K$, the collection of vectors $\F_{\varphi}=\sum_{K=1}^\infty\oplus\F_{\varphi^K}$
is a PF for $\K$  (and it was constructed in  \cite{CC98}). 

Assume that $\{E_j^{K}\}_{K\in\mathbb{N}, j=1,\ldots,K+1}$ is a bounded strictly increasing sequence, i.e. 
$E_1^1<E_2^1<E_1^2<E_{2}^2<E_3^2\ldots<E_1^K<E_2^K\ldots<E_K^K<E_{K+1}^K\ldots$.
The Hamiltonian generated by $\F_{\varphi}$ is 
$$
H_\varphi=\sum_{K=1}^{\infty}\sum_{j=1}^{K+1}\,E_j^K|\varphi_j^K \rangle\langle\varphi_j^K|
$$
It is easy to verify (using the previous example) that the point spectrum of $H_\varphi$ involves the subset
$\{E_{K+1}^K\}_{K=1}^\infty$ of the original quantities and the solutions of the equations, cf. \eqref{K33}: 
$$
\frac{1}{E_1^K-\mu}+\frac{1}{E_2^K-\mu}+\ldots+\frac{1}{E_{K}^K-\mu}=0, \qquad K\geq{2}.
$$  
\vspace{2mm}

Let us now discuss a fourth, physically motivated, example, based on the anti-commutation relations (CAR) for two fermionic modes.

{\bf Example 4.--} Let $a_1$ and $a_2$ be the operators satisfying the CAR $\{a_j,a_k^\dagger\}=\delta_{j,k}\1$, with $\{a_j,a_k\}=0$, $j,k=1,2$. 
Here $\1$ is the identity operator in the Hilbert space of the system, which is $\Hil=\mathbb{C}^4$. Calling $\eta_{0,0}$ the vacuum of $a_j$, $a_j\eta_{0,0}=0$, $j=1,2$, we can construct three more vectors acting on it with $a_j^\dagger$: $\eta_{1,0}=a_1^\dagger \eta_{0,0}$, $\eta_{0,1}=a_2^\dagger \eta_{0,0}$ and $\eta_{1,1}=a_1^\dagger a_2^\dagger \eta_{0,0}$. An explicit realization of these vectors and operators
 is the following: $\eta_{0,0}=\delta_1$, $\eta_{1,0}=\delta_2$, $\eta_{0,1}=\delta_3$, $\eta_{1,1}=\delta_4$, where $\{\delta_j\}$ is the canonical ONB in $\mathbb{C}^4$, and
$$
a_1=\left[
\begin{array}{cccc}
0 & 1& 0 & 0 \\
0 & 0& 0 & 0 \\
0 & 0& 0 & 1 \\
0 & 0& 0 & 0 \\
\end{array}
\right], \qquad a_2=\left[
\begin{array}{cccc}
0 & 0& 1 & 0 \\
0 & 0& 0 & -1 \\
0 & 0& 0 & 0 \\
0 & 0& 0 & 0 \\
\end{array}
\right].
$$ 
In \cite{bagoli} these operators have been used to construct a dynamical system describing two populations moving in a two-dimensional lattice, 
and interacting adopting a sort of predator-prey mechanism. The dynamics has been produced by the sum of several copies of a single-cell term, $H_\alpha$ ($\alpha$ labels the lattice cells), 
plus a global contribution responsible for the migration of the species. Here we only consider the single-cell term, which we rewrite as follows:
$$
H=\omega_1 a_1^\dagger a_1+\omega_2 a_2^\dagger a_2 +\lambda(a_1^\dagger a_2+a_2^\dagger a_1),
$$
 where $\omega_1$, $\omega_2$ and $\lambda$ are parameters whose values can be fixed in different way, according to which aspect of the system we want to put in evidence, see \cite{bagoli}.
We also refer to \cite{bagoli} for the meaning of this Hamiltonian, for the rationale for its introduction, and for the dynamics connected to it. 
The matrix expression for $H$ is
$$
H=\left[
\begin{array}{cccc}
0 & 0& 0 & 0 \\
0 & \omega_1& \lambda & 0 \\
0 & \lambda& \omega_2 & 0 \\
0 & 0& 0 & \omega_1+\omega_2 \\
\end{array}
\right],
$$
whose eigenvalues are
$$
E_1=0, \quad E_2=\omega_1+\omega_2 \quad  E_3=\frac{1}{2}\left(\omega_1+\omega_2-\sqrt{(\omega_1-\omega_2)^2+4\lambda^2}\right)
$$ 
and 
$$E_4=\frac{1}{2}\left(\omega_1+\omega_2+\sqrt{(\omega_1-\omega_2)^2+4\lambda^2}\right)
$$
with (normalized) eigenvectors
$$
e_1=\left[
\begin{array}{c}
1 \\
0 \\
0\\
0\\
\end{array}
\right], \qquad e_2=\left[
\begin{array}{c}
0 \\
0 \\
0\\
1\\
\end{array}
\right], \quad \mbox{ and } \quad e_3=\frac{f_3}{\|f_3\|}, \qquad e_4=\frac{f_4}{\|f_4\|},
$$
where
$$
f_3=\left[
\begin{array}{c}
0 \\
\frac{1}{2\lambda}\left(\omega_1-\omega_2-\sqrt{(\omega_1-\omega_2)^2+4\lambda^2}\right) \\
1\\
0\\
\end{array}
\right], \quad f_4=\left[
\begin{array}{c}
0 \\
\frac{1}{2\lambda}\left(\omega_1-\omega_2+\sqrt{(\omega_1-\omega_2)^2+4\lambda^2}\right) \\
1\\
0\\
\end{array}
\right].
$$
For concreteness, if we take $\lambda=2$, $\omega_1=\frac{1}{2}$ and $\omega_2=\frac{7}{2}$, then
$$
H=\left[
\begin{array}{cccc}
0 & 0& 0 & 0 \\
0 & 1/2& 2 & 0 \\
0 & 2& 7/2 & 0 \\
0 & 0& 0 & 4 \\
\end{array}
\right], \qquad e_3=\frac{1}{\sqrt{5}}\left[
\begin{array}{c}
0 \\
-2 \\
1\\
0\\
\end{array}
\right], \quad e_4=\sqrt{\frac{4}{5}}\left[
\begin{array}{c}
0 \\
\frac{1}{2} \\
1\\
0\\
\end{array}
\right],
$$
while $e_1$ and $e_2$ are those previously introduced. Also, $E_1=0$, $E_2=4$, $E_3=-\frac{1}{2}$, $E_4=\frac{9}{2}$,  and we can rewrite 
$H=\sum_{j=1}^4\, E_j|e_j\rangle\langle{e}_j|$.  

Due to the physical interpretation of the model, \cite{bagoli}, if we consider the orthogonal projector $P=\1-P_{1,0}$, $P_{1,0}f=\langle \eta_{1,0},f\rangle\eta_{1,0}$, $P$
project the system on a space in which it is impossible to find the system in a state with high density of the first species, and low density of the other species.
In other words, this state is forbidden for us. The (biological) reason for requiring this is that, for instance, we want to avoid a dominance of the first species on the second one. 
The computation of $H_{ph}$ is easy:  as for the effect of $P$ on the eigenvectors  $\{e_j\}$ of $H$, we get four vectors $\varphi_j=Pe_j$ in $\Hil=\mathbb{C}^4$.
 Removing the zero second component  from each one of these vectors\footnote{This component must be zero due to the action of $P$ on $e_j$.}
 we recover the following  PF $\F_\varphi=\{\varphi_j, 1\leq{j}\leq{4}\}$ in $\K=\mathbb{C}^3$,
\begin{equation}\label{K51}
\varphi_1=\left[
\begin{array}{c}
1 \\
0 \\
0\\
\end{array}
\right], \qquad \varphi_2=\left[
\begin{array}{c}
0 \\
0 \\
1\\
\end{array}
\right], \qquad \varphi_3=\cos\beta\left[
\begin{array}{c}
0 \\
1 \\
0\\
\end{array}
\right], \qquad \varphi_4=\sin\beta\left[
\begin{array}{c}
0 \\
1 \\
0\\
\end{array}
\right],
\end{equation}
where 
$$
\cos\beta=\frac{1}{\|f_3\|}=\frac{2|\lambda|}{\sqrt{[\omega_1-\omega_2-\sqrt{(\omega_1-\omega_2)^2+4\lambda^2}]^2+4\lambda^2}},
$$
$$
 \sin\beta=\frac{1}{\|f_4\|}=\frac{2|\lambda|}{\sqrt{[\omega_1-\omega_2+\sqrt{(\omega_1-\omega_2)^2+4\lambda^2}]^2+4\lambda^2}}, \quad \beta\in(0, {\pi}/{2}).
$$

The  operator  $H_{ph}$ acting in $\K=\mathbb{C}^3$ coincides with the operator $H_\varphi=\sum_{j=1}^4\, E_j|\varphi_j\rangle\langle{\varphi}_j|$
generated by $\F_\varphi$.  In view of \eqref{K41} and \eqref{K51}, the Gram matrix $G_\varphi$ and $B_\varphi(\mu)$  are:
$$
G_\varphi=\left[\begin{array}{cccc}
1 & 0 &  0 & 0 \\
0  & 1 &  0 & 0  \\
0  & 0 & \cos^2\beta & \cos\beta\sin\beta \\
0  & 0 & \cos\beta\sin\beta &  \sin^2\beta
\end{array}\right],  \qquad  B_\varphi(\mu)=\left[\begin{array}{cccc}
E_1-\mu & 0 &  0 & 0 \\
0 & E_2-\mu & 0 & 0   \\
0  &  0  &  b_{33}  & b_{34} \\
0  &  0  &  b_{43}  & b_{44}
\end{array}\right],
$$
where $b_{33}=(E_3-\mu)\cos^4\beta+(E_4-\mu)\cos^2\beta\sin^2\beta$, $b_{44}=(E_3-\mu)\cos^2\beta\sin^2\beta+(E_4-\mu)\sin^4\beta$,
$b_{34}=b_{43}=(E_3-\mu)\cos^3\beta\sin\beta+(E_4-\mu)\cos\beta\sin^3\beta$.

These matrices and  Corollary \ref{K12} allow one to describe eigenvalues $\tilde{E}_1$, $\tilde{E_2}$, and $\tilde{E_3}$
of $H_{ph}=H_\varphi$. Precisely, $\tilde{E}_1=E_1=0$,   $\tilde{E}_2=E_2=\omega_1+\omega_2$,  while $\tilde{E_3}$ coincides with the solution
$\mu$ of the equation 
$(E_3-\mu)\cos^2\beta+(E_4-\mu)\sin^2\beta=0.$ So we see how the eigenvalues of $H_{ph}$ are different from those of the original Hamiltonian $H$. 
Of course, these differences have consequences on the dynamics of the system, but this aspect will not be discussed in this paper.
However, we want to stress that differences in the eigenvalues imply, among other differences, different stationary states. This might have serious, and very interesting consequences: a system which, in principle, should not evolve in time (since it is in a supposed stationary state), does indeed evolve. Or vice-versa. In both cases this can be related to the fact that the {\em real} 
Hamiltonian for the system is not $H$, but $H_{ph}$. This is because of some constraint on the system.

\vspace{2mm}

The last example we want to discuss here is again built in terms of  anti-commutation relations (CAR) 
and fermionic modes, and is based on an ecological system considered first in \cite{bagoli2} and then in \cite{bagbook2}.

{\bf Example 5.--} The system we want to describe is made by two levels of organisms ($L_1$ and $L_2$),  one compartment for the nutrients 
and another compartment for the garbage, see Figure \ref{fig01} for a schematic view. The nutrients feed the organisms of $L_1$, which feed those of $L_2$. Moreover, when dying, the organisms of both levels contribute to increase the density of the garbage which, after some time, turns into nutrients. This is a simple example of closed ecosystem. More complicated systems (with more levels and with different kind of garbages) are considered in \cite{bagoli2}.

\begin{figure}
		
	\begin{picture}(450,90)
	\put(45,35){\thicklines\line(1,0){80}}
	\put(45,55){\thicklines\line(1,0){80}}
	\put(45,35){\thicklines\line(0,1){20}}
	\put(125,35){\thicklines\line(0,1){20}}
	\put(85,45){\makebox(0,0){$L_1$}}
	
	\put(305,-5){\thicklines\line(1,0){120}}
	\put(305,55){\thicklines\line(1,0){120}}
	\put(305,55){\thicklines\line(0,-1){60}}
	\put(425,55){\thicklines\line(0,-1){60}}
	\put(365,20){\makebox(0,0){Garbage}}
			
	\put(45,-5){\thicklines\line(1,0){80}}
	\put(45,15){\thicklines\line(1,0){80}}
	\put(45,-5){\thicklines\line(0,1){20}}
	\put(125,-5){\thicklines\line(0,1){20}}
	\put(85,5){\makebox(0,0){$L_2$}}

	\put(305,-55){\thicklines\line(1,0){120}}
	\put(305,-35){\thicklines\line(1,0){120}}
	\put(305,-55){\thicklines\line(0,1){20}}
	\put(425,-55){\thicklines\line(0,1){20}}
	\put(365,-45){\makebox(0,0){Nutrients}}
		
	\put(370,-10){\vector(0,-1){20}}
	\put(85,33){\vector(0,-1){16}}	
	
	\put(130,44){\vector(1,0){170}}
	\put(130,4){\vector(1,0){170}}
	\put(300,-45){\line(-1,0){280}}
	\put(20,-45){\line(0,1){90}}
	\put(20,45){\vector(1,0){20}}
		
	\end{picture}
	
	\vspace{2cm}
	\caption{\label{fig01}\footnotesize A schematic view to the single-garbage ecosystem.}
\end{figure}
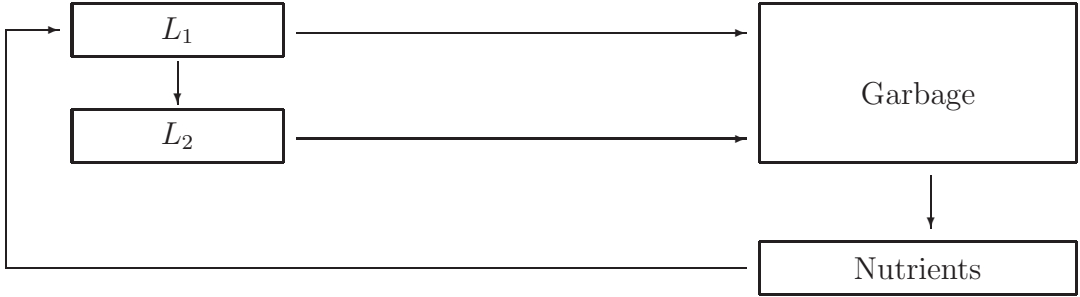

The dynamics of our system is described by  the following Hamiltonian:
$$
\left\{
\begin{aligned}
	&H=H_0+ H_I, \qquad \hbox{ with }  \\
	&H_0 = \sum_{j=0}^{3}\,\omega_{j}\, a_j^\dagger\,a_j,\\
	&H_I = \sum_{j=0}^2\,\lambda_j\left(a_j\,a_{3}^\dagger+a_{3}a_j^\dagger\right)+
	\sum_{j=0}^{1}\,\nu_j\left(a_j\,a_{j+1}^\dagger+a_{j+1}a_j^\dagger\right),
\end{aligned}
\right. 
$$
where $\{a_j,a_k^\dagger\}=\delta_{j,k}\Id$, and  $a_j^2=0$, for all $j, k=0,1,2,3$, and where
$\omega_j, \nu_j$ and $\lambda_j$ are real constants, whose meaning is explained in \cite{bagbook2, bagoli2}. The zero-th fermionic mode ($j=k=0$) is related to the nutrients, the $3$-th mode to the garbage, while
 the two remaining modes describe the organisms of the various trophic levels. We again refer to \cite{bagoli2} for the biological meaning of the various terms in $H$. Here, we just comment that, for instance, the term $\lambda_j a_{3}a_0^\dagger$, describes the fact that the garbage is
recycled by decomposers and transformed into nutrients: this is due to the fact that the density of the garbage decrease (because of the presence of the lowering operator $a_3$) , and simultaneously the density of the nutrients increases (because of the raising operator $a_0^\dagger$).

The equations of motion can be  deduced using the Heisenberg rule $\dot X=i[H,X]$, and the solution (which is not essential here) can be found in \cite{bagoli2}. 
Here we want to see how the idea introduced in this paper works for this particular example. For that, let $\mathsf{e}_{0,0,0,0}$ be the vacuum of the system: 
$a_j\mathsf{e}_{0,0,0,0}=0$ for $j=0,1,2,3$. This corresponds to an {\em essentially empty} system: very low densities in all the levels (including garbage and nutrients). 
Analogously, $\mathsf{e}_{0,0,1,0}$ is a vector describing a situation where only $L_2$ is filled, while all the other levels are essentially empty, and so on. 
We now fix, for concreteness, $\omega_0=\omega_3=2$, $\omega_1=3$, $\omega_2=4$, $\lambda_0=1$, $\lambda_1=2$, $\lambda_2=3$, $\nu_0=1$ and $\nu_1=3$. The criterium for choosing some particular values of the parameters of the Hamiltonian is widely discussed in \cite{bagbook2}. 
Here this choice is relevant just to allow a simple computation of the eigenvalues and the eigenvectors of $H$, which acts in $\Hil=\mathbb{C}^{16}$ and turns out to be 
the Hermitian matrix:
$$
H=\left[
\begin{array}{cccccccccccccccc}
0 & 0 & 0 & 0 & 0 & 0 & 0 & 0 & 0 & 0 & 0 & 0 & 0 & 0 & 0 & 0 \\
0 & 2 & -1 & 0 & 0 & 0 & 0 & 0 & -1 & 0 & 0 & 0 & 0 & 0 & 0 & 0 \\
0 & -1 & 3 & 0 & -3 & 0 & 0 & 0 & -2 & 0 & 0 & 0 & 0 & 0 & 0 & 0 \\
0 & 0 & 0 & 5 & 0 & -3 & 0 & 0 & 0 & -2 & 1 & 0 & 0 & 0 & 0 & 0 \\
0 & 0 & -3 & 0 & 4 & 0 & 0 & 0 & -3 & 0 & 0 & 0 & 0 & 0 & 0 & 0 \\
0 & 0 & 0 & -3 & 0 & 6 & -1 & 0 & 0 & -3 & 0 & 0 & 1 & 0 & 0 & 0 \\
0 & 0 & 0 & 0 & 0 & -1 & 7 & 0 & 0 & 0 & -3 & 0 & 2 & 0 & 0 & 0 \\
0 & 0 & 0 & 0 & 0 & 0 & 0 & 9 & 0 & 0 & 0 & -3 & 0 & 2 & -1 & 0 \\
0 & -1 & -2 & 0 & -3 & 0 & 0 & 0 & 2 & 0 & 0 & 0 & 0 & 0 & 0 & 0 \\
0 & 0 & 0 & -2 & 0 & -3 & 0 & 0 & 0 & 4 & -1 & 0 & 0 & 0 & 0 & 0 \\
0 & 0 & 0 & 1 & 0 & 0 & -3 & 0 & 0 & -1 & 5 & 0 & -3 & 0 & 0 & 0 \\
0 & 0 & 0 & 0 & 0 & 0 & 0 & -3 & 0 & 0 & 0 & 7 & 0 & -3 & 0 & 0 \\
0 & 0 & 0 & 0 & 0 & 1 & 2 & 0 & 0 & 0 & -3 & 0 & 6 & 0 & 0 & 0 \\
0 & 0 & 0 & 0 & 0 & 0 & 0 & 2 & 0 & 0 & 0 & -3 & 0 & 8 & -1 & 0 \\
0 & 0 & 0 & 0 & 0 & 0 & 0 & -1 & 0 & 0 & 0 & 0 & 0 & -1 & 9 & 0 \\
0 & 0 & 0 & 0 & 0 & 0 & 0 & 0 & 0 & 0 & 0 & 0 & 0 & 0 & 0 & 11 \\
\end{array}
\right].
$$
Using any mathematical software one easily deduces the following list of eigenvalues $\{E_j\}_{j=1}^{16}$, ordered in decreasing absolute value: 
$$
13.6645,\quad 11.4925,\quad11,\quad9.17202,\quad8.82798,\quad7,\quad 6.66453,\quad 6.50749,
$$
$$4.49251,\quad4.33547,\quad4,\quad-2.66453,\quad2.17202,\quad1.82798,\quad-0.492505,\quad0.
$$
Denote by $e_j$ the corresponding normalized eigenvectors for $H$. Then $\F_{e}=\{e_j\}_{j=1}^{16}$ is an ONB of $\Hil$
and $H=\sum_{j=1}^{16}\, E_j|e_j\rangle\langle{e}_j|$. $H$ is the analogous of the Hamiltonian $H_e$ in (\ref{K5b}).

What we are interested in here is to see the mathematical consequences  if we assume that the second level, $L_2$, cannot be filled because of, say, some particular reason: for instance, the organisms in $L_2$ receive nutrients from $L_1$, see Figure \ref{fig01}, but the external conditions can be not sufficiently good to allow these organisms to survive. From a mathematical point of view, this situation can be described by introducing the following orthogonal projector
$$
P_{L_2}=\sum_{i,j,k=0,1}P_{i,j,0,k}, 
$$
where $P_{i,j,0,k}f=\langle \mathsf{e}_{i,j,0,k},f\rangle \mathsf{e}_{i,j,0,k}$, for all vector $f$ in $\Hil$. 
The physical Hamiltonian $H_{ph}=P_{L_2}HP_{L_2}$ is an operator acting in the subspace
$\K=\mbox{span}\{\delta_1, \delta_2, \delta_3, \delta_4, \delta_9, \delta_{10}, \delta_{11}, \delta_{12}\}$ of $\Hil$, where $\{\delta_j\}$ is the canonical ONB of  
$\mathbb{C}^{16}$:
$$
H_{ph}=\left[
\begin{array}{cccccccccccccccc}
0 & 0 & 0 & 0 & 0 & 0 & 0 & 0 & 0 & 0 & 0 & 0 & 0 & 0 & 0 & 0 \\
0 & 2 & -1 & 0 & 0 & 0 & 0 & 0 & -1 & 0 & 0 & 0 & 0 & 0 & 0 & 0 \\
0 & -1 & 3 & 0 & 0 & 0 & 0 & 0 & -2 & 0 & 0 & 0 & 0 & 0 & 0 & 0 \\
0 & 0 & 0 & 5 & 0 & 0 & 0 & 0 & 0 & -2 & 1 & 0 & 0 & 0 & 0 & 0 \\
0 & 0 & 0 & 0 & 0 & 0 & 0 & 0 & 0 & 0 & 0 & 0 & 0 & 0 & 0 & 0 \\
0 & 0 & 0 & 0 & 0 & 0 & 0 & 0 & 0 & 0 & 0 & 0 & 0 & 0 & 0 & 0 \\
0 & 0 & 0 & 0 & 0 & 0 & 0 & 0 & 0 & 0 & 0 & 0 & 0 & 0 & 0 & 0 \\
0 & 0 & 0 & 0 & 0 & 0 & 0 & 0 & 0 & 0 & 0 & 0 & 0 & 0 & 0 & 0 \\
0 & -1 & -2 & 0 & 0 & 0 & 0 & 0 & 2 & 0 & 0 & 0 & 0 & 0 & 0 & 0 \\
0 & 0 & 0 & -2 & 0 & 0 & 0 & 0 & 0 & 4 & -1 & 0 & 0 & 0 & 0 & 0 \\
0 & 0 & 0 & 1 & 0 & 0 & 0 & 0 & 0 & -1 & 5 & 0 & 0 & 0 & 0 & 0 \\
0 & 0 & 0 & 0 & 0 & 0 & 0 & 0 & 0 & 0 & 0 & 7 & 0 & 0 & 0 & 0 \\
0 & 0 & 0 & 0 & 0 & 0 & 0 & 0 & 0 & 0 & 0 & 0 & 0 & 0 & 0 & 0 \\
0 & 0 & 0 & 0 & 0 & 0 & 0 & 0 & 0 & 0 & 0 & 0 & 0 & 0 & 0 & 0 \\
0 & 0 & 0 & 0 & 0 & 0 & 0 & 0 & 0 & 0 & 0 & 0 & 0 & 0 & 0 & 0 \\
0 & 0 & 0 & 0 & 0 & 0 & 0 & 0 & 0 & 0 & 0 & 0 & 0 & 0 & 0 & 0 \\
\end{array}
\right],
$$
whose eigenvalues are the following:
$$
7.38849,\quad7,\quad4.57577,\quad4.18728,\quad2.81272,\quad2.42423,\quad-0.38849, \quad 0.
$$
We see that the eigenvalues of $H$ and $H_{ph}$ are different, and that the spectrum $\sigma(H_{ph})$
is contained in $[E_{min}, E_{max}]$, as stated before. 

 Finally, the set $\F_\varphi=\{\varphi_j=P_{L_2}e_j\}_{j=1}^{16}$  is a PF
in the Hilbert space $\K$ and $$H_{ph}=\sum_{j=1}^{16}\, E_j|\varphi_j\rangle\langle{\varphi}_j|=\sum_{j=1}^{8}\, \tilde E_j|\tilde\varphi_j\rangle\langle{\tilde\varphi}_j|,$$ where $\tilde E_j$ are the eigenvalues above and $\tilde \varphi_j$ are the related eigenvectors, whose set is an ONB in $\Hil_{ph}=\mathbb{C}^8$. As in the previous example, the difference between $E_j$ and $\tilde E_j$ may have consequences on the dynamics of the system, but this aspect will not be considered in this paper.

\section{ $E$-connection between PFs and ONBs}\label{sII.3}
The examples above  suggests to introduce the following, physically motivated, relation between PFs and ONBs in a given Hilbert space $\K$:

\begin{defn}\label{K71} Given a PF $\F_\varphi=\{\varphi_j\in\K, \,j\in \mathbb{J}\}$ and a  bounded set of real numbers $E=\{E_j, \,j\in \mathbb{J}\}$, we say that 
the pair $(\F_\varphi, E)$  is $E$-connected to an ONB $\F_{\tilde{e}}=\{\tilde{e}_k\in\K, k\in\mathbb{J}'\}$ 
if a set of real numbers exists, $\tilde{E}=\{\tilde E_k, \ k\in \mathbb{J}'\}$, such that 
\be\label{K24}
{H_\varphi}=\sum_{j\in\mathbb{J}}\,E_j|\varphi_j\rangle\langle\varphi_j|=\sum_{k\in\mathbb{J}'}\,\tilde E_k|\tilde{e}_k\rangle\langle \tilde{e}_k|.
\en 
\end{defn}

This is exactly what we have seen in the examples considered in Section \ref{sectexe}. In other words, (\ref{K24}) can be used to give two different representations
 of the same operator $H_\varphi$ (the physical Hamiltonian, in our examples), one  in terms of a PF, $\F_{\varphi}$, and one, possibly more relevant for its physical interpretation,  in terms of $\F_{\tilde{e}}$.

If a pair $(\F_\varphi, E)$ is $E$-connected to an ONB $\F_{\tilde{e}}$,  then the real numbers $\tilde{E}_k$ in \eqref{K24} are eigenvalues of the self-adjoint operator
 $H_\varphi$,  whereas $\tilde{e}_k$ are the corresponding normalized eigenvectors.
The eigenvalues $\tilde E_k$ can be defined with the use of Theorem \ref{K10} and obviously, they are defined uniquely by the pair $(\F_\varphi, E)$.  
Moreover,  {due to Remark \ref{rem2}, the cardinality $|\mathbb{J}'|$ coincides with the frame potential ${\bf FP}[\F_\varphi]$ 
and it cannot exceed the cardinality $|\mathbb{J}|={\bf FP}[\F_\varphi]+{\bf e}[\F_\varphi]$ of $\mathbb{J}$.}

If $\F_\varphi$ is an ONB in $\K$,  then the pair $(\F_\varphi, E)$ is $E$-connected to the same 
ONB  $\F_\varphi=\F_{\tilde{e}}$ and $E=\tilde{E}$. A slightly more interesting example is the following:  if $\varphi_{2k-1}=\varphi_{2k}=\frac{1}{\sqrt{2}}{\mathsf{e}}_k$, where 
$\F_{\mathsf{e}}=\{\mathsf{e}_k, \ k\in\mathbb{N}\}$ is an ONB of $\K$, then
the set $\F_\varphi=\{\varphi_j,  \  j\in \mathbb{N}\}$ is a PF in $\K$ and the pair $(\F_\varphi, E)$  is $E$-connected to  $\F_{\mathsf{e}}=\F_{\tilde{e}}$.  In this case, 
 $\tilde{E}=\{\tilde E_k=\frac{1}{2}(E_{2k-1}+E_{2k}), \ k\in \mathbb{N}\}$.

\begin{prop}\label{K81}
{If a PF $\F_\varphi=\{\varphi_j,  \, j=1,2,\ldots, N\}$ has the finite potential $M={\bf FP}[\F_\varphi]$, then 
the pair  $(\F_\varphi, E)$  is $E$-connected to an ONB $\F_{\tilde{e}}=\{\tilde{e}_k\in\K, \  k=1,\ldots{M}\}$.} 
{If,  additionally,  $\F_\varphi$ has the finite excess ${\bf e}[\F_\varphi]$ and $E=\{E_j, \,j=1,\ldots{N}\}$ is} 
a sequence of positive quantities, then the quantities $\tilde{E}_k, \,k=1,\ldots{M}$ in \eqref{K24}
satisfy the relations:
$$
\sum_{k=1}^{M}\tilde{E}_k=\sum_{j=1}^NE_j\|\varphi_j\|^2, \quad \sum_{k=1}^{M}\tilde{E}_k^2=\sum_{j=1}^N\sum_{i=1}^NE_jE_i|\langle \varphi_i, \varphi_j\rangle|^2,
 \quad \sum_{k=1}^{M}\tilde{E}_k^2\geq\frac{1}{M}\left(\sum_{j=1}^NE_j\|\varphi_j\|^2\right)^.  
$$
If all $\varphi_j\not=0$ and $0<\tilde{E}_1\leq\tilde{E}_2\leq\ldots\tilde{E}_{M}$, then for all $1\leq{n}\leq{M}$
\begin{equation}\label{K44}
\sum_{k=1}^{n}\tilde{E}_k\geq\sum_{j=1}^nE_j\|\varphi_j\|^2.
\end{equation}
 \end{prop}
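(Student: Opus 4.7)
First, I will establish the existence of the $E$-connection. The assumption $M = {\bf FP}[\F_\varphi] < \infty$ combined with Remark \ref{rem2} forces $\dim \K = M$, so $H_\varphi$ is a bounded self-adjoint operator on a finite-dimensional Hilbert space. The spectral theorem then produces an orthonormal eigenbasis $\{\tilde{e}_k\}_{k=1}^{M}$ of $\K$ with real eigenvalues $\{\tilde{E}_k\}_{k=1}^{M}$, which delivers the decomposition in \eqref{K24}.

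Under the additional assumption of finite excess, Remark \ref{rem2} gives $N=|\mathbb{J}|=M+{\bf e}[\F_\varphi]<\infty$, so the representation $H_\varphi=\sum_j E_j|\varphi_j\rangle\langle\varphi_j|$ is a finite sum of rank-one operators on a finite-dimensional space. I would then obtain the first two identities by computing $\mathrm{tr}(H_\varphi)$ and $\mathrm{tr}(H_\varphi^2)$ in two ways: using the spectral decomposition on one side, and using the frame representation together with $\mathrm{tr}(|\varphi_i\rangle\langle\varphi_i|\varphi_j\rangle\langle\varphi_j|)=|\langle \varphi_i,\varphi_j\rangle|^2$ on the other. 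The third displayed inequality is the Cauchy--Schwarz bound $\left(\sum_k \tilde{E}_k\right)^2\leq M\sum_k \tilde{E}_k^2$, which combined with the first identity $\sum_k\tilde{E}_k=\sum_j E_j\|\varphi_j\|^2$ immediately yields the claim.

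The majorization inequality \eqref{K44} is the main obstacle. The plan is to invoke Naimark's theorem (Theorem \ref{Naimark}) to realize $H_\varphi=PH_eP$, with $H_e$ diagonal in the ONB $\{e_j\}$ of $\Hil=\K\oplus\M$ having eigenvalues $\{E_j\}_{j=1}^{N}$, and $P$ the orthogonal projector onto $\K$. The Cauchy interlacing theorem, applied to this orthogonal compression from the $N$-dimensional ambient space to the $M$-dimensional subspace $\K$, gives $\tilde{E}_k \geq E_{(k)}$, where $E_{(k)}$ denotes the $k$-th smallest element of $\{E_j\}_{j=1}^{N}$; summing over $k=1,\ldots,n$ yields $\sum_{k=1}^n \tilde{E}_k\geq\sum_{k=1}^n E_{(k)}$. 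To close the argument I would combine this with the elementary bound $\|\varphi_j\|^2\leq 1$, valid for every vector of a PF (substitute $f=\varphi_j$ in \eqref{K3} with $A=B=1$ and isolate the diagonal term $|\langle \varphi_j,\varphi_j\rangle|^2=\|\varphi_j\|^4$), together with the positivity of the $E_j$, which gives $\sum_{j=1}^n E_j\|\varphi_j\|^2\leq\sum_{j=1}^n E_j$. The two bounds chain together provided the frame indices are arranged so that $E_1\leq E_2\leq\ldots\leq E_N$; this ordering appears to be tacit in the statement, since otherwise a simple relabelling in, e.g., the Mercedes frame of Example 1 already violates \eqref{K44}.
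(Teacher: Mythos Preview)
Your argument for the existence of the $E$-connection coincides with the paper's: both invoke the spectral theorem on the finite-dimensional space $\K$ (with $\dim\K=M$). For the remaining assertions, however, the paper takes a different route. It observes that when all $E_j>0$ the rescaled vectors $\hat\varphi_j=\sqrt{E_j}\,\varphi_j$ form a frame in $\K$ whose frame operator is exactly $H_\varphi=\sum_j|\hat\varphi_j\rangle\langle\hat\varphi_j|$, and then outsources everything to known results in finite frame theory: \cite[Section~5]{CazLynch} and \cite[Corollary~2.3]{Caz} for the three displayed relations, and \cite[Theorem~6.3]{CazLynch} for \eqref{K44}.

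Your direct computations of $\mathrm{tr}(H_\varphi)$ and $\mathrm{tr}(H_\varphi^2)$ for the two identities, and the Cauchy--Schwarz bound for the third relation, are correct and fully self-contained, which is a genuine advantage over the paper's bare citations. For \eqref{K44}, your Cauchy-interlacing argument (applied to the compression $H_\varphi=PH_eP$) is likewise a legitimate and more elementary alternative to the cited frame-theoretic theorem. Your observation that an ordering of the $E_j$ must be tacit is also well taken: the Mercedes-frame counterexample you sketch is valid, and in fact the result \cite[Theorem~6.3]{CazLynch} on which the paper relies also carries an ordering hypothesis (on the norms $\|\hat\varphi_j\|^2=E_j\|\varphi_j\|^2$, together with decreasing order on the eigenvalues) that is not reproduced in the statement of the Proposition. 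In short, the paper's frame-operator viewpoint situates all four relations within a single existing theory, while your approach is more hands-on and makes transparent exactly which hypotheses are doing the work at each step.
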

\begin{proof}
The self-adjoint operator $H_\varphi=\sum_{j=1}^N\,E_j|\varphi_j\rangle\langle\varphi_j|$ acts in a finite dimensional space $\K$ with
$\dim\K=M$. Hence its spectrum is discrete and the corresponding normalized eigenfunctions $\{\tilde{e}_k\in\K, k=1,\ldots{M}\}$
form an ONB of $\K$.

If all $E_j$ are positive, then the set $\F_{\hat{\varphi}}=\{\hat\varphi_j=\sqrt{E_j}\varphi_j, \,j=1,\ldots{N}\}$ is a frame in $\K$ and the operator $H_\varphi$ turns out to be a frame
operator $S=\sum_{j=1}^N|\hat\varphi_j\rangle\langle\hat\varphi_j|$ of $\F_{\hat{\varphi}}$.  In this case, the first three relations in  Proposition \ref{K81} follow from
\cite[Section 5]{CazLynch} and \cite[Corollary 2.3]{Caz}. The inequality \eqref{K44} follows from \cite[Theorem 6.3]{CazLynch}.
\end{proof}

 Another simple result is given by the following Lemma, where the validity of the resolution of the identity is used both for the ONB and for the PF.

\begin{lemma}\label{K85}
Let $\F_\varphi$ be a PF and let the real numbers $E_j$ be the same for all $j\in\mathbb{J}$ (i.e.,  $E_j=\lambda$ for all  $j\in\mathbb{J}$).
Then  the pair $(\F_\varphi, E)$ is $E$-connected to every ONB  $\F_{\tilde{e}}$ of $\K$  and $E=\tilde{E}=\{\lambda\}$. 
\end{lemma}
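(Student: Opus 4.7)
The plan is to observe that the hypothesis $E_j=\lambda$ collapses the defining sum of $H_\varphi$ to a scalar multiple of the frame operator, which for a Parseval frame is the identity. Then any ONB resolution of the identity supplies the desired right-hand side of \eqref{K24}.

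First I would factor out $\lambda$ from the defining expansion to write
$$
H_\varphi=\sum_{j\in\mathbb{J}}\lambda\,|\varphi_j\rangle\langle\varphi_j|=\lambda\sum_{j\in\mathbb{J}}|\varphi_j\rangle\langle\varphi_j|.
$$
The operator $S_\varphi=\sum_{j\in\mathbb{J}}|\varphi_j\rangle\langle\varphi_j|$ is the frame operator of $\F_\varphi$; from the Parseval identity $\sum_{j}|\langle\varphi_j,f\rangle|^2=\|f\|^2$ valid for all $f\in\K$, one has $\langle S_\varphi f,f\rangle=\|f\|^2$, and since $S_\varphi$ is self-adjoint the polarization identity yields $S_\varphi=I_\K$. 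Hence $H_\varphi=\lambda I_\K$.

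Next, let $\F_{\tilde e}=\{\tilde e_k,\ k\in\mathbb{J}'\}$ be any ONB of $\K$. Then $I_\K=\sum_{k\in\mathbb{J}'}|\tilde e_k\rangle\langle \tilde e_k|$, so setting $\tilde E_k=\lambda$ for every $k\in\mathbb{J}'$ gives
$$
H_\varphi=\lambda I_\K=\sum_{k\in\mathbb{J}'}\tilde E_k|\tilde e_k\rangle\langle \tilde e_k|,
$$
which is precisely the identity \eqref{K24} defining $E$-connection, with $\tilde E=\{\lambda\}$.

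There is essentially no obstacle here: the only point needing care is the passage from the Parseval inequality to the operator identity $S_\varphi=I_\K$, but this is standard and already implicit in the earlier discussion (e.g., it is the reason formula \eqref{K5} reduces to an expansion of the identity when all $E_j$ agree). The lemma then holds for every ONB because $\lambda I_\K$ is diagonal in any orthonormal basis.
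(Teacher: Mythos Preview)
Your proof is correct and follows essentially the same approach as the paper: factor out $\lambda$, use that the frame operator of a Parseval frame is the identity so that $H_\varphi=\lambda I$, and then invoke the resolution of the identity for an arbitrary ONB with $\tilde E_k=\lambda$. The only difference is that you spell out the passage $S_\varphi=I_\K$ via the Parseval identity and polarization, whereas the paper simply cites ``the characteristic properties of PFs''; the arguments are otherwise identical.
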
 
\begin{proof} Due to the characteristic properties of PFs and the condition $E_j=\lambda$ we obtain  $H_\varphi=\sum_{j\in\mathbb{J}}\,E_j|\varphi_j\rangle\langle\varphi_j|=
\lambda\sum_{j\in\mathbb{J}}\,|\varphi_j\rangle\langle\varphi_j|=\lambda{I}$. 
Hence, formula (\ref{K24}) holds  for any arbitrary choice of ONB $\F_{\tilde{e}}$, since
$\sum_{k\in\mathbb{J}'}\tilde E_k\,|\tilde{e}_k\rangle\langle \tilde{e}_k|=\lambda\sum_{k\in\mathbb{J}'}\,|\tilde{e}_k\rangle\langle \tilde{e}_k|=\lambda I$.
\end{proof}

The following result is surely more interesting.
\begin{prop}\label{propK777}
Let $\F_\varphi$ be a PF and let the set of real numbers $E=\{E_j, \,j\in \mathbb{J}\}$ have only one accumulation point $\lambda$.
Then there exist an ONB $\F_{\tilde{e}}$ and a set of numbers $\tilde E_k$ such that  the pair $(\F_\varphi, E)$ is $E$-connected to $\F_{\tilde{e}}$.
\end{prop}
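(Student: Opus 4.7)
The strategy is to reduce the claim to the spectral theorem for compact self-adjoint operators on $\K$. First, I would invoke Theorem~\ref{Naimark} to embed $\K$ as a subspace of the larger Hilbert space $\Hil=\K\oplus\M$ equipped with the ONB $\F_e=\{e_j=\varphi_j\oplus\psi_j\}$, and the associated bounded self-adjoint operator $H_e=\sum_{j\in\mathbb{J}}E_j|e_j\rangle\langle e_j|$ on $\Hil$, so that $H_\varphi=PH_eP$ with $P$ the orthogonal projection from $\Hil$ onto $\K$. When $|\mathbb{J}|<\infty$ the space $\K$ is finite-dimensional and the existence of a diagonalising ONB for $H_\varphi$ is automatic; so I may assume $|\mathbb{J}|=\infty$.

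The key observation is that $H_e-\lambda I$ is a compact operator on $\Hil$. Since $\lambda$ is the unique accumulation point of the bounded set $\{E_j\}$, the index set $\{j\in\mathbb{J}:|E_j-\lambda|\geq\varepsilon\}$ is finite for every $\varepsilon>0$; otherwise one could extract a subsequence of $\{E_j\}$ converging to a limit distinct from $\lambda$, contradicting the uniqueness of the accumulation point. Consequently
$$
H_e-\lambda I=\sum_{j\in\mathbb{J}}(E_j-\lambda)\,|e_j\rangle\langle e_j|
$$
is the operator-norm limit of its finite-rank truncations indexed by $\{j:|E_j-\lambda|\geq\varepsilon\}$ as $\varepsilon\to 0^+$, hence is compact.

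Since the compact operators form a two-sided ideal in the algebra of bounded operators on $\Hil$, the compression
$H_\varphi-\lambda I_\K=P(H_e-\lambda I)P\big|_\K$
is a compact self-adjoint operator on $\K$. The Hilbert--Schmidt spectral theorem then yields an ONB $\F_{\tilde e}=\{\tilde e_k,\ k\in\mathbb{J}'\}$ of $\K$ consisting of eigenvectors of $H_\varphi-\lambda I_\K$, with real eigenvalues $\mu_k$ accumulating only at $0$. Setting $\tilde E_k=\mu_k+\lambda$ gives $H_\varphi\tilde e_k=\tilde E_k\tilde e_k$ and the spectral expansion
$$
H_\varphi=\sum_{k\in\mathbb{J}'}\tilde E_k\,|\tilde e_k\rangle\langle \tilde e_k|,
$$
which, compared with \eqref{K5}, realises the $E$-connection \eqref{K24}.

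The principal conceptual step, and hence the main obstacle, is recognising that the geometric hypothesis ``$\lambda$ is the only accumulation point of $\{E_j\}$'' translates precisely into the compactness condition $E_j-\lambda\to 0$ (equivalently, $H_e-\lambda I$ compact) needed for a genuinely discrete diagonalisation to exist. Everything else is a routine application of the Naimark dilation, the ideal property of the compact operators, and the spectral theorem; the bound $|\mathbb{J}'|\leq|\mathbb{J}|$ implicit in Definition~\ref{K71} is then immediate from Remark~\ref{rem2}.
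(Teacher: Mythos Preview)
Your proposal is correct and follows essentially the same approach as the paper: both subtract $\lambda I$, show the resulting operator is compact self-adjoint (you via the Naimark dilation $H_e-\lambda I$ on $\Hil$ and the ideal property of the compacts, the paper via the unitarily equivalent multiplication operator $\mathcal{E}-\lambda I$ on $\ell^2(\mathbb{J})$ and the projection $\mathcal{P}$), and then apply the spectral theorem for compact self-adjoint operators to obtain the diagonalising ONB with $\tilde E_k=\lambda+\mu_k$. The only cosmetic difference is that the paper works in the $\ell^2(\mathbb{J})$ model from Corollary~\ref{K12} rather than directly in the dilated space $\Hil$, and it separates the $\lambda I$ term by invoking Lemma~\ref{K85} explicitly.
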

\begin{proof}
Rewrite the left-hand side of \eqref{K24} as follows:
\begin{equation}\label{K999}
H_\varphi=\sum_{j\in\mathbb{J}}\,E_j|\varphi_j\rangle\langle\varphi_j|=\sum_{j\in\mathbb{J}}\,(E_j-\lambda)|\varphi_j\rangle\langle\varphi_j|+{\lambda}\sum_{j\in\mathbb{J}}\,|\varphi_j\rangle\langle\varphi_j|
\end{equation}
Due to Lemma \ref{K85}, ${\lambda}\sum_{j\in\mathbb{J}}\,|\varphi_j\rangle\langle\varphi_j|=\sum_{k\in\mathbb{J}'}\,\lambda|\tilde{e}_k\rangle\langle \tilde{e}_k|$,
where $\{\tilde{e}_k\}_{k\in\mathbb{J}'}$ is an \emph{arbitrary} ONB of $\K$. On the other hand, since $\lim_{j\to\infty}(E_j-\lambda)=0$, 
the operator (see \eqref{K201}) 
$$
 (\mathcal{E}-\lambda{I})\{c_j\}=\{(E_j-\lambda)c_j\}, \qquad  {D}(\mathcal{E})=\ell^2(\mathbb{J}).
$$
 is compact in $\ell^2(\mathbb{J})$, \cite[problem 132]{Halmoshbook}. Therefore, the operator  (see \eqref{K101})
  $$
 (\mathcal{E}_\varphi-\lambda{I})\{c_j\}=\mathcal{P}(\mathcal{E}-\lambda{I})\{c_j\} ,  \qquad  D(\mathcal{E}_\varphi)=\mathcal{R}(\theta_\varphi)
 $$
acting in  the subspace $\mathcal{R}(\theta_\varphi)$ of $\ell^2(\mathbb{J})$ is compact.  This means that 
 $$
 H_\varphi-\lambda{I}=\sum_{j\in\mathbb{J}}\,(E_j-\lambda)|\varphi_j\rangle\langle\varphi_j|
 $$
is self-adjoint and compact in $\K$ (since $H_\varphi-\lambda{I}$ is unitary equivalent to  $\mathcal{E}_\varphi-\lambda{I}$).
Therefore, there exists an ONB $\{\tilde{e}_k\}_{k\in\mathbb{J}'}$ of $\K$ formed by normalized eigenvectors $\tilde{e}_k$:
\begin{equation}\label{K998}
(H_\varphi-\lambda{I})\tilde{e}_k=\mu_k\tilde{e}_k, \qquad k\in\mathbb{J}'
\end{equation}
and $H_\varphi-\lambda{I}=\sum_{j\in\mathbb{J}}\,(E_j-\lambda)|\varphi_j\rangle\langle\varphi_j|=\sum_{k\in\mathbb{J}'}\,\mu_k|\tilde{e}_k\rangle\langle \tilde{e}_k|$.
Substituting the obtained formulas into \eqref{K999}, we obtain 
$$
\sum_{j\in\mathbb{J}}\,E_j|\varphi_j\rangle\langle\varphi_j|=\sum_{k\in\mathbb{J}'}\,(\lambda+\mu_k)|\tilde{e}_k\rangle\langle \tilde{e}_k|,
$$
where  the choice of ONB $\{\tilde{e}_k\}_{k\in\mathbb{J}'}$ is determined by \eqref{K998}. 
\end{proof}
\begin{rem}
The proof of Proposition \ref{propK777} cannot be modified for the case of two or more  accumulation points of $E=\{E_j, \,j\in \mathbb{J}\}$ 
and the problem of checking  $E$-connection in this case is still open.
\end{rem}

\section{Conclusions}\label{secconcl}

This paper is a continuation of a series of contributions on possible definitions of Hamiltonian-like operators in terms of sets of vectors which are not necessarily ONB, 
as in \cite{bell,bit}. In particular, this is relevant when the physical part of the Hilbert space of the system under analysis produces, in a quite natural  way, 
a PF in terms of which the physical part of the Hamiltonian,  $H_{ph}$, can be expanded. With this in mind we have studied some properties of $H_{ph}$, with 
particular focus on its eigenvalues. In this perspective we have introduced the notion of $E$-connection, and we have analyzed it  for bounded operators. 
{ Unbounded Hamiltonians generated by PF's have a lot of curious properties that require a deeper analysis 
 and they will be considered in a forthcoming paper.}

We have also discussed few examples of our general framework, 
two of which directly connected to biological models considered in the literature in recent years. These are both given in terms of fermionic operators. In this paper we have considered mainly the mathematical consequences of our settings. We plan to consider what changes in the dynamics (and, more in general, in the physics) of the systems in a future paper. In this perspective, a possible application is the use of the idea of localized systems as described in \cite{jpg}. This is particularly relevant for quantization procedures.

\section*{Acknowledgements}
{The authors thank the Referees whose remarks  and suggestions
led to improvements in the paper.}
FB was partially supported by the University of Palermo, via the CORI 2017 action, and by the Gruppo Nazionale di Fisica Matematica of Indam.
SK was partially supported by the Faculty of Applied Mathematics AGH UST statutory tasks within subsidy of 
Ministry of Science and Higher Education of Poland.

\end{document}